\numberwithin{equation}{section}
\newtheorem{definicao}{Definition}[section]
\newtheorem{teorema}{Theorem}[section]
\newtheorem{corolario}[teorema]{Corollary}
\newtheorem{lema}[teorema]{Lemma}
\newtheorem{rem}[teorema]{Remark}
\newcounter{exemplo}[section]
\newcommand{\exemplo}{\stepcounter{exemplo}
	\noindent\textbf{Example \arabic{section}.\arabic{exemplo}. }}
\newcommand{\er}{\mathbb{R}}
\newcommand{\en}{\mathbb{N}}
\newcommand{\Ne}{\mathrm{e}}
\newcommand{\Es}{\hspace{0.5cm}}
\newcommand{\ov}{\overline}
\newcommand{\un}{\underline}
\newcommand{\dst}{\displaystyle}
\newcommand{\til}[1]{\widetilde{#1}}
\DeclareMathOperator{\codim}{codim}
\DeclareMathOperator{\Imagem}{Im}
\DeclareMathOperator{\Ker}{Ker}
\DeclareMathOperator{\Dom}{Dom}
\begin{document}
	
	\begin{center}
		{\bf {\Large Generalized non-autonomous Cohen-Grossberg neural network model}}
	\end{center}

	\begin{center}
		Ahmed Elmwafy, Jos\'{e} J. Oliveira, C\'esar M. Silva
	\end{center}

	\vskip .5cm

	\begin{abstract}
		
	In the present paper, we investigate both the global exponential stability and the existence of a periodic solution of a general differential equation with unbounded distributed delays. The main stability criterion depends on the dominance of the non-delay terms over the delay terms. The criterion for the existence of a periodic solution is obtained with the application of the coincide degree theorem. We use the main results to get criteria for the existence and global exponential stability of periodic solutions of a generalized higher-order periodic Cohen-Grossberg neural network model with discrete-time varying delays and infinite distributed delays.
    Additionally, we provide a comparison with the results in the literature and a numerical simulation to illustrate the effectiveness of some of our results.
	\end{abstract}
	
	\noindent
	{\it \textbf{Keywords}}: Cohen-Grossberg neural network, Periodic solutions, Global exponential stability, Coincide degree theorem, discrete and distributed delays.
	
	\noindent
	{\it \textbf{Mathematics Subject Classification System 2020}}: 34K20, 34K25, 34K60, 92B20.
	
	\section{Introduction}
	In the past decades, due to application in various sciences, delayed functional differential equations have attracted the attention of an increasing number of researchers. In many fields, such as population dynamics, ecology, epidemiology, disease evolution, and neural networks, differential equations with delay have served as models.\\
As a result of their widespread use in several fields including image and signal processing \cite{liu2019stability}, pattern recognition \cite{yao1990model}, optimization \cite{park2005lmi}, and content-addressable memory \cite{wu2010exponential}, delayed neural networks have had their dynamical behaviours extensively studied \cite{zhang2020dynamics}, \cite{wu2022time}, \cite{huang2020stability}. \\Obtaining results about the convergence characteristics of neural networks is crucial in these applications. To keep the entire network from acting chaotically, convergent dynamics are required. Significantly, the global convergent dynamics imply that every trajectory of the network can converge to some equilibrium state or invariable sets so that, when used as an associative memory, every state in the underlying space can serve as a key to recover certain stored memory. As an outcome, the state space is entirely covered by different basins of the stored memories. Furthermore, the globally convergent dynamics indicate that the neural network algorithm will ensure convergence to an optimal solution from each initial guess when used as an optimization solver \cite{eliasmith2005unified}. \\
The fact that the connectivity weights, the neuron charging time, and the external inputs change throughout time is another important consideration. Thus, it is relevant to introduce and investigate neural network models that incorporate the temporal structure of neural activities.

Among the various neural network models that have been extensively investigated and applied, Cohen-Grossberg which was first introduced and investigated by Cohen and Grossberg \cite{cohen1983absolute} by the following system of ordinary differential equations,
\begin{align}
	\frac{dx_i(t)}{dt}&= -a_i(x_i(t))\big[b_i(x_i(t))-\sum_{j=1}^{n} c_{ij}f_j(x_j(t))+I_i(t)\big],\hspace{0.2cm} t\geq 0,\hspace{0.1cm} i=1,\dots,n
\end{align}
where $n$ is a natural number indicates the number of neurons, $x_i(t)$ is the $i$th neuron state at time t, $a_i(u)$ denote the amplification functions, $b_i(u)$ are the self-signal functions, $f_j(u)$ are the activation functions, $c_{ij}$ represent the strengths of connectivity between neurons $i$ and $j$, $I_i$ denote the inputs from outside of the system.\\
Differential equations modelling neural networks should include time delays due to synaptic transmission time across neurons or, in artificial neural networks, communication time among amplifiers in order to be more realistic.\\

Since Cohen and Grossberg first proposed the CGNN model \cite{cohen1983absolute}, the dynamical properties of CGNNs such as stability, instability, and periodic oscillation have been extensively studied for theoretical and application considerations. Some studies have already accomplished several positive results such as \cite{aouiti2020new}, \cite{chen2019multiple}, \cite{li2016some}, \cite{kang2015global}, \cite{dong2021global} and etc., most of the  results in the literature require either the boundedness of the activation functions or the boundedness of delays. For example, \cite{chen2010global} investigated the global exponential stability of the periodic solutions of delayed CGNNs but in the case of discontinuous activation functions. Besides that the existence, uniqueness and stability of almost periodic solutions for a class of NNs have  been studied in \cite{qin2013global}. Meanwhile, \cite{long2007existence}, \cite{zhang2007existence}, and \cite{liu2011periodic} started studying the existence and exponential stability of high-order CGNNs depending on many techniques. For example, \cite{long2007existence} and \cite{zhang2007existence} used some differential inequality techniques, and \cite{liu2011periodic}  depended on using a proper Lyapunov function and the properties of M-matrix. Therefore, the present work is meaningful and the conclusion is novel.

Since as far as we know, there are few results on high-order CGNNs without using the Lyapunov technique, neither assuming the boundedness nor the discontinuity of the activation functions. Motivated by the proceeding studies, we consider a generalized high-order CGNN model with discrete time-varying and distributed delays to study the existence of periodic solutions and global exponential stability without using the Lyapunov technique nor the boundedness of the activation functions.

In this paper, we use the continuation theorem of coincidence degree theory to show the existence of a periodic solution of a generalized system of high-order CGNNs, and then we present sufficient conditions to guarantee the global exponential stability of that system. The remainder of this work is organized as follows. Section 2 is a preliminary section where we introduce our notations and our hypotheses. Section 4 introduces the global exponential stability of general neural network models. In Section 3, we investigate the existence and global exponential stability of the periodic solution of that generalized high-order CGNNs system under certain assumptions. In section 5, We show numerical simulations to demonstrate the efficacy of the results we have obtained.

   \section{Preliminaries and model description}
     In the present paper, for $n\in\en$, we consider the n-dimensional vector space $\er^n$  equipped with the norm $|x|=\max\{|x_i|,\, i=1,\dots,n\}$.

     For a positive real number $\epsilon$, we consider the Banach space
     $$
     {UC}^n_\epsilon=\left\{\phi\in C((-\infty,0];\er^n): \sup_{s\leq 0}\frac{|\phi(s)|}{\Ne^{-\epsilon s}}<+\infty,\, \frac{\phi(s)}{\Ne^{-\epsilon s}} \text{ is uniformly continuous on }(-\infty,0]\right\},
     $$
      equipped with the norm $\dst\|\phi\|_\epsilon=       \dst \sup_{s\leq 0}\frac{|\phi(s)|}{\Ne^{-\epsilon s}}$.

     In \cite{hino2006functional}, a basic theory about the existence, uniqueness, and continuation solutions  is established for the general functional differential equation in the phase space $UC_\epsilon^n$  
      \begin{align}\label{fde}
     	x'(t)=f(t,x_t),\Es t\geq 0,
     \end{align}
     where, for an open set $D\subseteq UC_\epsilon^n$, the function $f:[0,+\infty)\times D\to\er^n$ is continuous and $x_t$ denotes the function
    $x_t:(-\infty,0]\to\er^n$ defined by $x_t(s)=x(t+s)$ for $s\leq 0$.

    We denote by  $x(t,t_0,\phi)$ a solution of \eqref{fde} with initial condition $x_{t_0}=\phi \text{ for } t_0\geq 0 \text{ and } \phi\in D$.
    
   For $x\in\er^n$, we also use $x$ to denote the constant function $\phi(s)=x$ in $UC_\epsilon^n$. A vector $x\in \er^n$ is said to be positive if $x_i>0$ for all $i=1,\dots,n$ and we denote it by $x>0$.

   Now, we introduce the Banach space $BC$ of all continuous bounded functions $\phi:(-\infty,0]\to\er^n$ equipped with the norm $\|\phi\|=\dst\sup_{s\leq  0}|\phi(s)|$. It is clear that $BC\subseteq {UC}^n_\epsilon$ and we have $\|\phi\|_\epsilon\leq \|\phi\|$ for all $\phi\in BC$.

  In the phase space $UC_\epsilon^n$, for $n\in\en$ and $\epsilon>0$, we consider the following general nonautonomous differential system with infinite delays,
  \begin{align}\label{1.1}
  	x'_i(t)=a_i(t,x_i(t))\big[-b_i(t,x_i(t))+f_i(t,x_t)\big],\Es t\geq 0, \, i=1,\dots,n,
  \end{align}
  where $a_i:[0,+\infty)\times \er\to (0,\infty)$, $b_i:[0,+\infty)\times \er\to \er$, and $f_i:[0,+\infty)\times {UC}^n_\epsilon\to\er$ are continuous functions.

  The goal is to apply the results to Cohen-Grossberg neural network-type models, thus we only consider bounded initial conditions. i.e.
  \begin{align}\label{1.1 ic}
  	x_{t_0}&=\phi, \Es\text{ for } \phi \in BC \text{ and }  t_0\geq0.
  \end{align}

 The continuity of $a_i$, $b_i$, and $f_i$  functions assures that the initial value problem \eqref{1.1}-\eqref{1.1 ic} has a solution (see \cite[Theorem 2.1]{hale1978phase}).

    As we always consider bounded initial conditions, in this paper we consider the following definition of global exponential stability.
\begin{definicao}\label{defn 2.1}
	The system \eqref{1.1} is said to be globally exponentially stable if there are $\delta>0$ and $C\geq 1$ such that
	\begin{align*}
		|x(t,t_0,\phi)-x(t,t_0,\psi)|\leq C \Ne^{-\delta(t-t_0)}\|\phi-\psi\|,\Es\forall t_0\geq 0,\, \forall t\geq t_0,\, \forall \phi,\psi\in BC.
	\end{align*}
\end{definicao}
It should be emphasized that the preceding definition of global exponential stability is the usually used one in the literature on neural networks (\cite{zhang2009global},\cite{zhao2008dynamics},\cite{zhao2008dynamics}).

	\section{Global exponential stability}\label{section 2}
	
In this section, we obtain sufficient conditions for the global exponential stability of \eqref{1.1}. To do that in this section we assume the following hypotheses.

 For each  $i=1,\dots,n$:
 \begin{enumerate} [(H1)]
    \item there are $\un{a}_i, \ov{a}_i>0$ such that
    \begin{align*}
        \un{a}_i < a_i(t,u)<\ov{a}_i, \Es  \forall t\geq 0,\, \forall u\in\er;
    \end{align*}
    \item there exists a continuous function $D_i:[0,+\infty)\to \er$ such that
    \begin{align*}
        D_i(t)a_i^2(t,u)\leq \frac{\partial a_i}{\partial t}(t,u),\Es \forall t>0,\, \forall u\in \er;
    \end{align*}
    \item there exists a function $\beta_i:[0,+\infty)\to(0,+\infty)$ such that
    \begin{align*}
        \frac{b_i(t,u)-b_i(t,v)}{u-v}\geq \beta_i(t), \Es \forall t\geq 0, \, \forall u,v\in \er,\, u\neq v;
    \end{align*}
    \item the function $f_i:[0,+\infty)\times {UC}^n_\epsilon\to\er$ is  Lipschitz on its second variable
    i.e., there is a continuous function $\mathcal{L}_i:[0,+\infty)\to [0,+\infty)$ such that
    \begin{align*}
        |f_i(t,\phi)-f_i(t,\psi)|\leq \mathcal{L}_i(t)||\phi-\psi||_\epsilon, \Es \forall t\geq 0,\, \forall \phi,\psi\in {UC}^n_\epsilon;
    \end{align*}
    \item for all $t\geq 0$,
    \begin{align}\label{H5}
        \un{a}_i\big(\beta_i(t)+D_i(t)\big)-\ov{a}_i\mathcal{L}_i(t) >\epsilon.
    \end{align}
\end{enumerate}

By the generalized Gronwall's inequality \cite[Lemma 6.2]{hale1980ordinary} and the Continuation Theorem \cite[Theorem 2.4]{hale1978phase}, we can assure that the solutions of the initial value problem \eqref{1.1}-\eqref{1.1 ic} are defined on $\er$.

Now, we are in a position the obtain the main stability criterion for system \eqref{1.1}.
\begin{teorema}\label{theorem 2.2}
If (H1)-(H5) hold, then the system \eqref{1.1} is globally exponentially stable.
\end{teorema}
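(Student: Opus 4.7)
The plan is to bound the difference of two solutions by an exponentially decaying auxiliary functional adapted to the amplification coefficients. For $\phi, \psi \in BC$, set $x(t) := x(t, t_0, \phi)$, $y(t) := x(t, t_0, \psi)$, and for each $i$ define
\[
V_i(t) = \left|\int_{y_i(t)}^{x_i(t)} \frac{ds}{a_i(t, s)}\right|,
\]
so that by (H1), $|x_i(t) - y_i(t)|/\ov{a}_i \leq V_i(t) \leq |x_i(t) - y_i(t)|/\un{a}_i$. The point of this substitution is that when $V_i$ is differentiated along the flow, the factor $a_i(t, x_i(t))$ in front of \eqref{1.1} cancels against $1/a_i$ inside the integral, so that the clean differences $b_i(t, x_i) - b_i(t, y_i)$ and $f_i(t, x_t) - f_i(t, y_t)$ appear and (H3), (H4) apply at face value.

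Writing $V_i(t) = |g_i(t)|$ with $g_i(t) = \int_{y_i(t)}^{x_i(t)} ds/a_i(t, s)$, a direct computation using \eqref{1.1} gives
\[
g_i'(t) = -[b_i(t, x_i) - b_i(t, y_i)] + [f_i(t, x_t) - f_i(t, y_t)] - \int_{y_i(t)}^{x_i(t)} \frac{(\partial_t a_i)(t, s)}{a_i^2(t, s)}\, ds.
\]
Applying (H3) to the first bracket, (H4) to the second, and (H2) to the integral in each of the cases $g_i > 0$ and $g_i < 0$ (the crossing set $\{x_i = y_i\}$ is harmless since $|x_i - y_i|$ then vanishes on the right), the Dini derivative of $V_i$ satisfies
\[
D^+ V_i(t) \leq -(\beta_i(t) + D_i(t))\,|x_i(t) - y_i(t)| + \mathcal{L}_i(t)\,\|x_t - y_t\|_\epsilon.
\]
Using $|x_i - y_i| \geq \un{a}_i V_i$ in the first term and setting $w_i(t) = \ov{a}_i\,\Ne^{\epsilon(t - t_0)}\, V_i(t)$ converts this into
\[
w_i'(t) \leq [\epsilon - \un{a}_i(\beta_i(t) + D_i(t))]\,w_i(t) + \ov{a}_i\,\mathcal{L}_i(t)\,\Ne^{\epsilon(t - t_0)}\,\|x_t - y_t\|_\epsilon.
\]

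The key coupling step is to bound $\Ne^{\epsilon(t - t_0)}\|x_t - y_t\|_\epsilon$ by $\max\bigl(\|\phi - \psi\|_\epsilon,\, \sup_{\tau \in [t_0, t]} \max_j w_j(\tau)\bigr)$. This follows by splitting the supremum defining $\|x_t - y_t\|_\epsilon$ at $t + s = t_0$: the tail $t + s < t_0$ contributes $\Ne^{-\epsilon(t - t_0)}\|\phi - \psi\|_\epsilon$ from the initial data, while on $t + s \geq t_0$ the weight $\Ne^{\epsilon s}$ matches exactly the exponential in $w_j$. Setting $C = \max_i \ov{a}_i/\un{a}_i \geq 1$ and $M = C\|\phi - \psi\|$, one has $w_i(t_0) \leq M$ and $\|\phi - \psi\|_\epsilon \leq M$.

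A first-exit argument then closes the proof. If some $w_i$ first reaches $M$ at $t^* > t_0$, then at $t^*$ one has $\Ne^{\epsilon(t^* - t_0)}\|x_{t^*} - y_{t^*}\|_\epsilon \leq M$ and the differential inequality collapses to
\[
w_i'(t^*) \leq -M\bigl[\un{a}_i(\beta_i(t^*) + D_i(t^*)) - \ov{a}_i\,\mathcal{L}_i(t^*) - \epsilon\bigr] < 0
\]
by (H5) (the case $\phi = \psi$ being trivial), which contradicts $w_i'(t^*) \geq 0$ at a local maximum. Hence $w_i(t) \leq M$ on $[t_0, \infty)$, equivalently $|x(t) - y(t)| \leq C\|\phi - \psi\|\Ne^{-\epsilon(t - t_0)}$, which is Definition~\ref{defn 2.1} with $\delta = \epsilon$. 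The main obstacle I expect is the exponential-norm bookkeeping linking $\|x_t - y_t\|_\epsilon$ to the $w_j$; the necessity of taking $\delta = \epsilon$ for this coupling to close cleanly is precisely why (H5) is phrased in terms of the phase-space weight $\epsilon$.
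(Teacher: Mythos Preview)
Your argument is correct and is essentially the paper's own proof: the same auxiliary quantity $\int_{y_i}^{x_i}\frac{ds}{a_i(t,s)}$ weighted by $\Ne^{\epsilon(t-t_0)}$, the same split of $\|x_t-y_t\|_\epsilon$ at $t+s=t_0$, and the same first-exit contradiction via (H5). Two minor remarks: (i) when you replace $-(\beta_i+D_i)|x_i-y_i|$ by $-\un a_i(\beta_i+D_i)V_i$ you are implicitly using $\beta_i(t)+D_i(t)>0$, which does follow from (H5) but should be said; (ii) your constant $C=\max_i(\ov a_i/\un a_i)$ is in fact sharper than the paper's $\max_i\ov a_i/\min_i\un a_i$, which is a small bonus of carrying the factor $\ov a_i$ inside each $w_i$ rather than pulling a uniform bound at the end.
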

\begin{proof}
   Let $t_0>0$, $\phi=(\phi_1,\dots,\phi_n)\in BC$, $\psi=(\psi_1,\dots,\psi_n)\in BC$, and consider  two solutions, $x(t)=x(t,t_0,\phi)$ and $y(t)=x(t,t_0,\psi)$, of \eqref{1.1}.

   For each $t\geq  t_0$, define $V(t)=V(t,t_0,x(\cdot),y(\cdot))=(V_i(t),\ldots,V_n(t))\in\er^n$ by
    \begin{equation}\label{1.3}
        V_i(t):= \Ne^{\epsilon(t-t_0)}sign\big(x_i(t)-y_i(t)\big)\int_{y_i(t)}^{x_i(t)}\frac{1}{a_i(t,u)}du,\Es i=1,\dots,n.
    \end{equation}
From (H1), we conclude that
\begin{eqnarray}\label{est_V}
	\Ne^{-\epsilon(t-t_0)}\underline{a}_iV_i(t)\leq|x_i(t)-y_i(t)|\leq\Ne^{-\epsilon(t-t_0)}\overline{a}_iV_i(t),\Es\forall t\geq t_0,\,i=1,\ldots,n.
\end{eqnarray}

    Firstly, we show that
    \begin{equation}\label{1.4}
        |V(t)|\leq \max_i\{\un{a}_i^{-1}\}\|\phi-\psi\|,\Es \forall t\geq t_0.
    \end{equation}
    Obviously, from \eqref{est_V}, we have
    \begin{eqnarray*}
        |V(t_0)|\leq \max_i\big\{\un{a}_i^{-1} |x_i(t_0)-y_i(t_0)|\big\}\leq \max_i\{\un{a}_i^{-1}\}\|\phi-\psi\|.
    \end{eqnarray*}
    Now, to obtain a contradiction, we assume that inequality \eqref{1.4} is false. Consequently, there exists $t_1>t_0$ such that
    $$
    |V(t_1)|> \max_i\{\un{a}_i^{-1}\}\|\phi-\psi\|.
    $$
    Define
    $$
      T:=\min\left\{t\in[t_0,t_1]: V(t)=\max_{s\in[t_0,t_1]}|V(s)|\right\}.
    $$
    Choosing $i\in\{1,\dots,n\}$ such that $V_i(T)=|V(T)|$, we have
    \begin{eqnarray}\label{prop_V}
    	V_i(T)>0,\Es V'_i(T)\geq 0,\Es\text{and}\Es V_i(T)>|V(t)|,\,\, \forall t<T.
    \end{eqnarray}
   From \eqref{1.1}, and (H2), (H3), and (H4), we obtain
    \begin{eqnarray*}
        V'_i(T)&=&\epsilon V_i(T)+\Ne^{\epsilon(T-t_0)}sign\big(x_i(T)-y_i(T)\big)\left[\frac{1}{a_i(T,x_i(T))}x'_i(T)\right.\\
        & &\left.-\frac{1}{a_i(T,y_i(T))}y'_i(T) +\int_{y_i(T)}^{x_i(T)}-\frac{\frac{\partial a_i}{\partial t}(T,u)}{a_i^2(T,u)}du\right]\\
        &=&\epsilon V_i(T)+\Ne^{\epsilon(T-t_0)}sign\big(x_i(T)-y_i(T)\big)\bigg[b_i(T,y_i(T))-b_i(T,x_i(T))\\
        & &+f_i(T,x_T)-f_i(T,y_T)+\int_{y_i(T)}^{x_i(T)}-\frac{\partial_t a_i(T,u)}{a_i^2(T,u)}du\bigg]\\
         &\leq&\epsilon V_i(T)+\Ne^{\epsilon(T-t_0)}\big[-\beta_i(T) |x_i(T)-y_i(T))|+\mathcal{L}_i(T)||x_T-y_T||_\epsilon\\
        & &-D_i(T)|x_i(T)-y_i(T)|\big].
    \end{eqnarray*}
    Hypothesis (H5) implies $\beta_i(T)+D_i(T)>0$, and from \eqref{est_V}, we obtain
    \begin{eqnarray}\label{1.5}
        V'_i(T) &\leq& \epsilon V_i(T)-\un{a}_i \big[\beta_i(T)+D_i(T)\big]V_i(T)\nonumber\\
            & & +\Ne^{\epsilon(T-t_0)}\mathcal{L}_i(T)\max\left\{\sup_{s\leq t_0-T}|x(T+s)-y(T+s)|\Ne^{\epsilon s}, \sup_{t_0-T<s\leq 0}|x(T+s)-y(T+s)|\Ne^{\epsilon s}\right\}\nonumber\\
        &\leq& \epsilon V_i(T)-\un{a}_i \big[\beta_i(T)+D_i(T)\big]V_i(T)\nonumber\\
           & & + \Ne^{\epsilon(T-t_0)}\mathcal{L}_i(T)\max\big\{\|\phi-\psi\|\Ne^{\epsilon (t_0-T)}, \sup_{t_0-T<s\leq 0}|x(T+s)-y(T+s)|e^{\epsilon s}\big\}.\nonumber
    \end{eqnarray}
    By \eqref{est_V}, we obtain
    \begin{eqnarray*}
                V'_i(T) &\leq& \epsilon V_i(T)-\un{a}_i \big[\beta_i(T)+D_i(T)\big]V_i(T)\\
                 & & +\Ne^{\epsilon(T-t_0)}\mathcal{L}_i(T)\max\big\{\|\phi-\psi\|\Ne^{\epsilon (t_0-T)}, \sup_{t_0-T<s\leq 0}\Ne^{-\epsilon(T+s-t_0)+\epsilon s}\ov{a}_i V_i(T+s)\big\}\\
                 &=& \epsilon V_i(T)-\un{a}_i \big[\beta_i(T)+D_i(T)\big]V_i(T)+\ov{a}_i\mathcal{L}_i(T)\max\left\{\frac{\|\phi-\psi\|}{\ov{a}_i}, \sup_{t_0-T<s\leq 0} V_i(T+s)\right\}.
    \end{eqnarray*}
    By (H1), the definition of $T$, and \eqref{prop_V},  we have 
    $$
     V'_i(T) \leq  \epsilon V_i(T)-\un{a}_i \big[\beta_i(T)+D_i(T)\big]V_i(T)+\ov{a}_i\mathcal{L}_i(T)V_i(T).
    $$
    From \eqref{prop_V} and (H5), we conclude that
    $$
       V'_i(T) \leq \left[\epsilon -\un{a}_i \big(\beta_i(T)+D_i(T)\big)+\ov{a}_i\mathcal{L}_i(T)\right]V_i(T)<0,
    $$
    which contradicts \eqref{prop_V} and hence \eqref{1.4} holds.

    From \eqref{est_V} and \eqref{1.4}, we obtain
    $$
      |x(t)-y(t)|\Ne^{\epsilon(t-t_0)}\min\left\{\ov{a}_i^{-1}\right\}\leq|V(t)|\leq\max_i\{\un{a}_i^{-1}\}\|\phi-\psi\|,
     $$
    thus
    $$
            |x(t)-y(t)|\leq C e^{-\epsilon(t-t_0)}\|\phi-\psi\|, \Es \forall t\geq t_0,
   $$
    with $C=\frac{\max_i\{\un{a}_i^{-1}\}}{\min_i\{\ov{a}_i^{-1}\}}=\frac{\max_i\{\ov{a}_i\}}{\min_i\{\un{a}_i\}}\geq1$, which shows that the system \eqref{1.1} is globally exponentially stable.
\end{proof}

We remark that hypothesis (H2) trivially holds (with $D_i(t)=0$ for all $t>0$) in case of all functions $a_i$ do not explicitly depend on time $t$, i.e. $a_i(t,u)=a_i(u)$ for all $i=1,\ldots,n$ and $u\in\er$. Thus, under the assumption
	\begin{description}
		\item[(h5)] For all $t\geq0$ and $i=1,\ldots,n$, we have $\un{a}_i\beta_i(t)-\ov{a}_i\mathcal{L}_i(t)>\epsilon$,
	\end{description}
we have the following result for system
	\begin{eqnarray}\label{eq:modelo-funcao-amp-ind-t}
	  	x'_i(t)=a_i(x_i(t))\big[-b_i(t,x_i(t))+f_i(t,x_t)\big],\Es t\geq 0, \, i=1,\dots,n.
	\end{eqnarray}
\begin{corolario}\label{cor:sem-hipotese-H2}
	Assume (H1), (H3), (H4), and (h5) hold. Then, system~\eqref{eq:modelo-funcao-amp-ind-t} is globally exponentially stable.
\end{corolario}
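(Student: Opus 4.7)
The plan is to observe that the corollary is an immediate specialization of Theorem~\ref{theorem 2.2}, so the only real work is checking that the missing hypothesis (H2) is automatically available and that (H5) collapses to (h5) in this setting.

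First I would note that, since each amplification function is time-independent, i.e.\ $a_i(t,u)=a_i(u)$, its partial derivative with respect to $t$ is identically zero: $\frac{\partial a_i}{\partial t}(t,u)=0$ for all $t>0$ and $u\in\er$. Choosing the continuous function $D_i:[0,+\infty)\to\er$ to be identically zero, the required inequality
\[
D_i(t)\,a_i^2(t,u)=0\leq 0=\frac{\partial a_i}{\partial t}(t,u)
\]
holds trivially, so (H2) is satisfied with $D_i\equiv 0$.

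Next I would substitute $D_i\equiv 0$ into hypothesis (H5). The condition
\[
\un{a}_i\bigl(\beta_i(t)+D_i(t)\bigr)-\ov{a}_i\mathcal{L}_i(t)>\epsilon
\]
becomes $\un{a}_i\beta_i(t)-\ov{a}_i\mathcal{L}_i(t)>\epsilon$ for all $t\geq 0$ and all $i=1,\dots,n$, which is exactly the assumption (h5). The remaining hypotheses (H1), (H3), (H4) are carried over from the statement of the corollary unchanged.

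With all five hypotheses (H1)--(H5) now in force for system~\eqref{eq:modelo-funcao-amp-ind-t} (which is a particular case of \eqref{1.1} with $a_i$ independent of $t$), Theorem~\ref{theorem 2.2} applies directly and yields the global exponential stability claimed. There is no genuine obstacle here; the proof is a one-line reduction, and the only mild subtlety worth flagging is the verification that $D_i\equiv 0$ is an admissible choice in (H2), which it is because the defining inequality is non-strict.
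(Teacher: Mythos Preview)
Your proof is correct and follows exactly the paper's own approach: take $D_i\equiv 0$ so that (H2) holds trivially, observe that (H5) then reduces to (h5), and invoke Theorem~\ref{theorem 2.2}. The paper's proof is just the one-line version of what you wrote.
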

\begin{proof}
	Hypothesis (H2) holds with $D(t)=0$, thus the result comes from Theorem~\ref{theorem 2.2}.
\end{proof}	

Now consider the model studied in \cite{oliveira2017global}
\begin{eqnarray}\label{model_jj_NN_2017}
	x'(t)=a_i(t,x_i(t))\left[-b_i(t,x_i(t))+\sum_{k=1}^K\sum_{j=1}^nf_{ijk}(t,{x_j}_t)\right],\Es t\geq0,\,i=1,\ldots,n,
\end{eqnarray}
where $n,K\in\en$, $a_i$ and $b_i$ are functions as in system \eqref{1.1} and $f_{ijk}:[0,+\infty)\times UC_\epsilon^1\to\er$ are continuous functions for $i,j=1,\ldots,n$ and $k=1,\ldots,K$.

We will also assume the following conditions:
	\begin{description}
		\item[(h4)] for each $i,j=1,\ldots,n$ and $k=1,\ldots,K$, there exists a continuous function $\mathcal{F}_{ijk}:[0,+\infty)\to[0,+\infty)$ such that
		$$
		  |f_{ijk}(t,\varphi)-f_{ijk}(t,\psi)|\leq \mathcal{F}_{ijk}(t)\|\varphi-\psi\|_\epsilon,\Es\forall t\geq0,\,\varphi,\phi\in UC^1_\epsilon.
		$$
		\item[(h5')] for all $t\ge 0$ and $i=1,\ldots,n$, we have
        $$\un{a}_i\big(\beta_i(t)+D_i(t)\big)-\ov{a}_i\sum_{k=1}^K\sum_{j=1}^n\mathcal{F}_{ijk}(t)>\epsilon.$$
	\end{description}

As system \eqref{model_jj_NN_2017} is a particular situation of \eqref{1.1}, the following stability criterion holds.
\begin{corolario}\label{cor:model-jj17}
	Assume that (H1), (H2), (H3), (h4) and (h5') hold. Then system \eqref{model_jj_NN_2017} is globally exponentially stable.
\end{corolario}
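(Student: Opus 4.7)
The plan is to recognize that system \eqref{model_jj_NN_2017} is the special case of \eqref{1.1} obtained by defining
$$
f_i(t,\phi):=\sum_{k=1}^K\sum_{j=1}^n f_{ijk}(t,\phi_j),\Es i=1,\ldots,n,
$$
for $\phi=(\phi_1,\ldots,\phi_n)\in UC_\epsilon^n$, and then to verify that (h4) and (h5') reduce, for this structured $f_i$, to (H4) and (H5). Once this is done, Theorem~\ref{theorem 2.2} delivers the conclusion at once, since (H1), (H2), (H3) are common to both sets of hypotheses.

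The key computation is the passage from (h4) to (H4). Given $\phi,\psi\in UC_\epsilon^n$, the triangle inequality and (h4) give
$$
|f_i(t,\phi)-f_i(t,\psi)|\leq\sum_{k=1}^K\sum_{j=1}^n|f_{ijk}(t,\phi_j)-f_{ijk}(t,\psi_j)|\leq\sum_{k=1}^K\sum_{j=1}^n\mathcal{F}_{ijk}(t)\,\|\phi_j-\psi_j\|_\epsilon.
$$
Because the norm on $\er^n$ is the maximum norm, $|\phi_j(s)-\psi_j(s)|\leq|\phi(s)-\psi(s)|$ for every $s\leq 0$, and consequently $\|\phi_j-\psi_j\|_\epsilon\leq\|\phi-\psi\|_\epsilon$ for each $j$. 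Hence (H4) holds with the continuous function
$$
\mathcal{L}_i(t):=\sum_{k=1}^K\sum_{j=1}^n\mathcal{F}_{ijk}(t).
$$
Substituting this $\mathcal{L}_i(t)$ into \eqref{H5} yields precisely the inequality in (h5'), so (H5) also holds.

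With all five hypotheses of Theorem~\ref{theorem 2.2} verified for \eqref{model_jj_NN_2017}, global exponential stability follows immediately. I do not expect any real obstacle: the corollary is essentially a repackaging of Theorem~\ref{theorem 2.2} for the additively separated nonlinearity appearing in \eqref{model_jj_NN_2017}, and the only routine point to check is the compatibility between the $UC_\epsilon^n$-norm on vector-valued histories and the $UC_\epsilon^1$-norm on their scalar components, which is immediate from the maximum-norm convention fixed in Section~2.
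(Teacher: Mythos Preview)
Your proof is correct and follows exactly the same route as the paper: identify \eqref{model_jj_NN_2017} as the special case of \eqref{1.1} with $f_i(t,\phi)=\sum_{k=1}^K\sum_{j=1}^n f_{ijk}(t,\phi_j)$, deduce (H4) from (h4) with $\mathcal{L}_i(t)=\sum_{k,j}\mathcal{F}_{ijk}(t)$, observe that (H5) then reads as (h5'), and invoke Theorem~\ref{theorem 2.2}. Your write-up is in fact slightly more detailed than the paper's, since you make explicit the inequality $\|\phi_j-\psi_j\|_\epsilon\leq\|\phi-\psi\|_\epsilon$ coming from the maximum norm.
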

\begin{proof}
  System \eqref{model_jj_NN_2017} is a particular situation of \eqref{1.1} with
  $$
    f_i(t,\varphi)=\sum_{k=1}^K\sum_{j=1}^nf_{ijk}(t,\varphi_j),\Es\forall t\geq0,\,\varphi=(\varphi_1,\ldots,\varphi_n)\in UC_\epsilon^n.
  $$
  From (h4), we know that (H4) holds with
  $$
    \mathcal{L}_i(t)=\sum_{k=1}^K\sum_{j=1}^n\mathcal{F}_{ijk}(t),\Es\forall t\geq0,\,i=1,\ldots,n.
  $$
  Moreover, (H5) reads as (h5'). Thus the results comes from Theorem~\ref{theorem 2.2}.
\end{proof}

\begin{rem}\label{rem:criterio-estabilidade-jjNN17} We remark that the exponential stability of \eqref{model_jj_NN_2017} was proved in \cite{oliveira2017global} under the assumptions (H1), (H2), (H3), (h4), and a condition equivalent to
\begin{description}
    \item[(h5'')] for all $t\ge 0$ and $i=1,\ldots,n$, we have
    \begin{eqnarray}\label{cond-H5-em-jj17}
	   \un{a}_i\big(\beta_i(t)+D_i(t)\big)-\sum_{k=1}^K\sum_{j=1}^n\ov{a}_j\mathcal{F}_{ijk}(t)>\epsilon.
    \end{eqnarray}
	\end{description}
We emphasize that conditions (h5') and \eqref{cond-H5-em-jj17} are different, thus Corollary \ref{cor:model-jj17} presents a new exponential stability criterion for the system \eqref{model_jj_NN_2017}.
\end{rem}

\section{Existence of periodic solution}\label{not+basicResults}
In this section, we assume that \eqref{1.1} is a periodic system and we establish sufficient conditions for the existence of a periodic solutions.

The existence of a periodic solutions will be proved through Mawhin's Continuation Theorem. Before stating the referred theorem, we need to recall some definitions and facts.

\begin{definicao}
	Let $X$ and $Z$ two Banach spaces.\\	
A linear mapping $L: \Dom\, L \subseteq X \to Z$ is called a \emph{Fredholm mapping of index zero} if $\dim \Ker_L = \codim \Imagem_L < \infty$ and $\Imagem_L$ is closed in $Z$.
\end{definicao}
Given a Fredholm mapping of index zero, $L: \Dom\, L \subseteq X \to Z$ , it is well known that there are continuous projectors $P:X\to X$ and $Q:Z\to Z$ such that
$\Imagem_P = \Ker_L$, $\Ker_Q = \Imagem_L = \Imagem_{I-Q}$, $X = \Ker_L \oplus \Ker_P$ and $Z = \Imagem_L \oplus \Imagem_Q$.
It follows that $L|_{\Dom\, L \,\cap\, \ker_P}: \Dom\, L \,\cap \,\ker_P \to \Imagem_L$ is invertible. We denote the inverse of that map by $K_P$.
\begin{definicao}
Let $U$ be an open bounded subset of $X$. We say that a continuous mapping $N: \overline{U} \subseteq X \to Z$ is $L$-compact on $\overline{U}$ if the set $QN(\overline U)$ is bounded and the mapping $K_P(I-Q)N: \overline{U} \subseteq X \to X$ is compact.
\end{definicao}

\begin{teorema}[Mawhin's Continuation Theorem]\label{mwahin}
	Let $X$ be a Banach space and $\Omega\subseteq X$ an open bounded set. Suppose $L:\Dom\, L\subset X\rightarrow X$ is a Fredholm operator with zero index and that $N:\overline{\Omega}\rightarrow X$ is L-compact on $\overline{\Omega}$ . Moreover, assume that all the following conditions are satisfied:
	\begin{enumerate}[1.]
		\item $ Lx\neq\lambda Nx,\quad\forall x \in \partial\Omega\cap \Dom\, L,\, \lambda\in(0,1)$;
		\item $QNx\neq 0,\quad\forall x \in \partial\Omega\cap \Ker\, L$;
		\item ${\deg}_B\{QN, \Omega\cap  \Ker\, L,0\}\neq 0$, where  ${\deg}_B$ denotes the Brouwer degree.
	\end{enumerate}
	Then, the equation $Lx=Nx $ has at least one solution in $\overline{\Omega}$.
\end{teorema}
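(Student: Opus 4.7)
The plan is to reformulate $Lx=Nx$ as a fixed-point problem for a compact operator, and then invoke Leray-Schauder degree theory together with homotopy invariance.

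First, I would exploit the Fredholm-of-index-zero structure. Since $\dim\Ker L=\dim\Imagem Q<\infty$, there exists a linear isomorphism $J:\Imagem Q\to\Ker L$. Using the decompositions $X=\Ker L\oplus\Ker P$ and $Z=\Imagem L\oplus\Imagem Q$, a short algebraic check shows that, for $x\in\Dom L$, the equation $Lx=Nx$ is equivalent to the fixed-point equation
$$
 x = Px + JQNx + K_P(I-Q)Nx.
$$
Writing $Mx$ for the right-hand side, the $L$-compactness of $N$ on $\overline{\Omega}$ guarantees that $M$ is compact, so $I-M$ is an admissible map for the Leray-Schauder degree on $\Omega$, provided it has no zero on $\partial\Omega$.

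Second, I would introduce the one-parameter family
$$
 \Phi_\lambda(x) = x - Px - JQNx - \lambda K_P(I-Q)Nx, \qquad \lambda\in[0,1],
$$
and establish that $\Phi_\lambda(x)=0$ is equivalent to the pair of conditions $Lx=\lambda Nx$ and $QNx=0$. The verification uses that $JQNx\in\Ker L=\Imagem P$, that $K_P(I-Q)Nx\in\Ker P$, and that $Lx\in\Imagem L=\Ker Q$: applying $P$ to $\Phi_\lambda(x)=0$ gives $JQNx=0$, hence $QNx=0$, and applying $L$ then yields $Lx=\lambda Nx$. For $\lambda\in(0,1)$ hypothesis~1 rules out zeros on $\partial\Omega$; at $\lambda=1$ this is the original equation $Lx=Nx$; at $\lambda=0$ the equation reduces to $x\in\Ker L$ with $QNx=0$, which is excluded on $\partial\Omega$ by hypothesis~2. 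Thus $\Phi_\lambda$ is an admissible homotopy and invariance of the degree gives $\deg_{LS}(\Phi_1,\Omega,0)=\deg_{LS}(\Phi_0,\Omega,0)$.

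Third, I would compute $\deg_{LS}(\Phi_0,\Omega,0)$ by reducing to the finite-dimensional subspace $\Ker L$. Since $\Phi_0$ restricts to $-JQN$ on $\Ker L$ and acts as the identity on the complementary direction $\Ker P$, the standard reduction formula yields $|\deg_{LS}(\Phi_0,\Omega,0)|=|\deg_B(JQN,\Omega\cap\Ker L,0)|$, and since $J$ is a linear isomorphism this equals $|\deg_B(QN,\Omega\cap\Ker L,0)|$. Hypothesis~3 forces this quantity to be nonzero, so $\deg_{LS}(I-M,\Omega,0)\neq 0$ and the existence property of the Leray-Schauder degree produces $x\in\Omega$ with $Lx=Nx$. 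The delicate step is the bookkeeping in the second paragraph: choosing $\Phi_\lambda$ so that the equivalence with the pair $(Lx=\lambda Nx,\,QNx=0)$ holds, and then matching each regime of $\lambda$ with exactly one of the three hypotheses so that no boundary zero appears anywhere along the homotopy.
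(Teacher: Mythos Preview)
The paper does not prove this theorem: it is stated as a classical result (Mawhin's Continuation Theorem) and invoked as a black box in the proof of Theorem~\ref{teo:existencia-orbita-periodica}. There is therefore no ``paper's own proof'' to compare against.

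That said, your sketch follows the standard Leray--Schauder reduction that underlies Mawhin's theorem and is essentially correct. A couple of small points worth tightening if you flesh it out: (i) the case $\lambda=1$ in your homotopy is not covered by hypothesis~1, which only treats $\lambda\in(0,1)$; the usual way to handle this is to observe that either $Lx=Nx$ already has a solution on $\partial\Omega$ (and we are done, since the conclusion allows $x\in\overline{\Omega}$), or it does not, in which case $\Phi_1$ has no boundary zero and the homotopy goes through; (ii) in the degree reduction step, the sign of $\deg_{LS}(\Phi_0,\Omega,0)$ versus $\deg_B(QN,\Omega\cap\Ker L,0)$ can differ by $\pm 1$ depending on the orientation induced by $J$, so working with absolute values as you do is the clean way to state it.
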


For studying the system \eqref{1.1} in case of being periodic, the following hypotheses will be considered:
 \begin{description}
 	\item(H1*) For each $i=1,\ldots,n$, there exist $\ov a_i\, , \un a_i>0$ such that 
  \begin{eqnarray*}
            \un a_i< a_i(t,u)< \ov a_i \text{ for all  } t\geq0,\, u\in\er;
  \end{eqnarray*}
 	\item[(H2*)] There is $\omega>0$ such that, for each $i=1,\ldots,n$,
 	$$
 	  a_i(t,u)=a_i(t+\omega,u),\Es b_i(t,u)=b_i(t+\omega,u),\Es f_i(t,\phi)=f_i(t+\omega,\phi)
 	$$
 	for all $t\geq0$, $u\in\er$, and $\phi\in BC$;
	\item[(H3*)] For each $i=1,\dots,n$, there exist $\omega-$periodic continuous functions $\beta_i,\beta_i^*:[0,+\infty)\to(0,+\infty)$ such that
	\begin{align*}
		\beta_i(t)\leq\frac{b_i(t,u)-b_i(t,v)}{u-v}\leq \beta_i^*(t), \Es \forall t\in[0,\omega], \, \forall u,v\in \er,\, u\neq v;
	\end{align*}
 \item[(H4*)] For each $i=1,\dots,n$, there exists a $\omega-$periodic continuous function $\mathcal{L}_i:[0,+\infty)\to [0,+\infty)$ such that
\begin{align*}
	|f_i(t,\phi)-f_i(t,\psi)|\leq \mathcal{L}_i(t)\|\phi-\psi\|, \Es \forall t\in[0,\omega],\, \forall \phi,\psi\in BC;
\end{align*}
	\item[(H5*)] For each $i=1,\dots,n$,
	\begin{align*}
		\beta_i(t)>\mathcal{L}_i(t),\Es \forall t\in[0,\omega].
	\end{align*}
\end{description}


From (H2*), we conclude that the continuous functions $t\mapsto b_i(t,0)$ and $t\mapsto f_i(t,0)$ are $\omega$-periodic and therefore bounded. From (H3*), we also conclude that $\beta_i$ are bounded away from zero and $\beta^*_i$ are bounded.

Defining
\begin{eqnarray}\label{def-sup-coef}
  \un{\beta}_i:=\min_{t\in[0,\omega]}\beta_i(t),\,\,\,\ov{\beta}^*_i:=\max_{t\in[0,\omega]}\beta^*_i(t),\,\,\,\ov{b}_i:=\max_{t\in[0,\omega]}|b_i(t,0)|,\,\,\text{and}\,\,\,\ov{f}_i:=\max_{t\in[0,\omega]}|f_i(t,0)|,
\end{eqnarray}
so that we have $0< \un{\beta}_i, \ov{\beta}^*_i$, and $0\leq \ov{b}_i,\ov{f}_i$.

We denote by $X$ the Banach space
$$
  X=\big\{\phi\in C(\er:\er^n): \phi\text{ is }\omega-\text{periodic}\},
$$
with the norm $\|\phi\|=\dst\sup_{t\in[0,\omega]}|\phi(t)|$, for $\phi\in X$.

For $\Dom_L=\{\phi\in X: \phi'\in X\}\subseteq X$, define the linear operator $L: \Dom_L\to X$ by
\begin{eqnarray}\label{3.4}
  L\phi=\phi'
\end{eqnarray}
i.e., for all $t\in\er$ and $\phi(t)=(\phi_1(t),\ldots,\phi_n(t))\in \Dom_L$, we have $\big(L\phi\big)(t)=(\phi_1'(t),\ldots,\phi_n'(t))$.

It is not difficult to show that $\Ker_L\cong\er^n$ and
\begin{eqnarray}\label{def-ImL}
  \Imagem_L=\left\{\phi=(\phi_1,\ldots,\phi_n)\in X:\int_0^\omega\phi_1(t)dt=\dots=\int_0^\omega\phi_n(t)dt=0\right\},
\end{eqnarray}
with $\Imagem_L$ closed in $X$ and $\dim \Ker_L=\codim \Imagem_L=n$, thus $L$ is a Fredhom operator with zero index.

Now, we consider the projection $P:X\to X$ defined by
\begin{eqnarray}\label{def-P}
  P\phi=\frac{1}{\omega}\int_0^\omega\phi(t)dt=\frac{1}{\omega}\left(\int_0^\omega\phi_1(t)dt,\dots,\int_0^\omega\phi_n(t)dt\right),\Es\forall\phi=(\phi_1,\ldots,\phi_n)\in X.
\end{eqnarray}

The projection $P$ is continuous and, considering $Q\phi=P\phi$, we have $\Imagem_P=\Ker_L$, $\Ker_Q=\Imagem_L$, and the operator $L_{|_{\Dom_L\cap \Ker_P}}:\Dom_L\cap \Ker_P\to \Imagem_L$ is invertible and we denote the inverse by $K_P$. By \eqref{3.4} and \eqref{def-ImL}, we obtain that $K_p\phi=\big((K_P\phi)_1,\cdots,(K_P\phi)_n\big)$ with
\begin{eqnarray}\label{3.5'}
 (K_P\phi)_i(t)=\int_0^t\phi_i(u)du-\frac{1}{\omega}\int_0^\omega\int_0^u\phi_i(s)dsdu,\,\,\forall \phi=(\phi_1,\ldots,\phi_n)\in \Imagem_L,
\end{eqnarray}
for $i=1,\ldots,n$.

For a convenient bounded open set $\Omega\subseteq X$, define the function $N:\ov{\Omega}\to X$ by $N\phi=\big((N\phi)_1,\ldots,(N\phi)_n\big)$, where
\begin{eqnarray} \label{3.3}
	(N\phi)_i(t)=a_i(t,\phi_i(t))\bigg[-b_i(t,\phi_i(t))+f_i(t,\phi_t)\bigg],
\end{eqnarray}
for all $t\in\er$, $\phi=(\phi_1,\ldots,\phi_n)\in X$, and $i=1,\ldots,n$.

We claim that, from the continuity of $a_i$, $b_i$, and $f_i$, \eqref{3.5'} and \eqref{3.3}, we can conclude that, for any $\alpha>0$, the mapping $N$ is $L$-compact in the set $\Omega=\{\phi \in X:\|\phi\|<\alpha\}$.

In fact, for any $t \in \er$ and any $x \in X$, we have $\displaystyle\|QNx\|\le \max_{i} \ov{a}_i[2\ov{\beta}_i^*\alpha+\ov{b}_i+\ov{f}_i]$, and we conclude that $QN(X)$ is bounded, implying that $QN(\overline{\Omega})$ is bounded.

Additionally, we also need to show that the mapping $K_P(I-Q)N$ is compact. To achieve this, we show that for any bounded $V \subseteq \overline{\Omega}$, the set $\overline{K_P(I-Q)N(V)}$ is compact. It is easy to verify that, for any sequence, $(\phi_n)$, with $\phi_n \in V$, $n \in \mathbb{N}$, such that $\phi_n \to \phi$, we have, for any $t,t_0 \in \er$,
\begin{equation}\label{eq:family-equicont}
\begin{split}
&\lim_{n \to +\infty} \left|K_P(I-Q)N(\phi_n)(t)-K_P(I-Q)N(\phi_n)(t_0)\right|\\
& \quad \quad \quad \le 3 \max_{i} [\ov{a}_i(2\ov{\beta}_i^*\alpha+\ov{b}_i+\ov{f}_i)] \, (t-t_0).
\end{split}
\end{equation}
and
\begin{equation}\label{eq:family-bounded}
\lim_{n \to +\infty} \left\|K_P(I-Q)N(\phi_n)\right\| \le 3 \omega \max_{i} [\ov{a}_i(2\ov{\beta}_i^*\alpha+\ov{b}_i+\ov{f}_i)].
\end{equation}
Inequality~\eqref{eq:family-equicont} shows that the family of functions $\overline{K_P(I-Q)N(V)}$ is equicontinuous and inequality
\eqref{eq:family-bounded} shows that the norms of all the functions in the referred family of functions are bounded by the same constant. Ascoli-Arzela theorem allows us to conclude that the set $\overline{K_P(I-Q)N(V)}$ is compact. Thus the
mapping $K_P(I-Q)N$ is compact and the claim is proved.

Notice that equation~\eqref{eq:family-bounded} only allows us to conclude that
$$\displaystyle \lim_{n \to +\infty} \left|K_P(I-Q)N(\phi_n)(t)\right| \le 3 \omega \max_{i} [\ov{a}_i(\ov{b}_i+\ov{f}_i)], \text{ for any } t \in [0,\omega].$$ Thus we are not able to apply directly Ascoli-Arzela's theorem to functions in $$\ov{K_P(I-Q)N(\ov{\Omega})}.$$ Instead, we must consider the space
$\widetilde{\Omega}=\{\phi \in C([0,\omega]:\er^n):\|\phi\|<\alpha\}$ instead of $\Omega$,
with the norm defined in the same way. This is not a problem since once we show the compactness property for $\widetilde{\Omega}$, the same property holds for $\Omega$, because the functions on $\Omega$ are $\omega-$periodic.

In view of \eqref{3.3} and \eqref{3.4}, for $\lambda\in(0,1)$ and $x(t)=(x_1(t),\ldots,x_n(t))\in X$, the operator equation $Lx=\lambda Nx$ is equivalent to the following equation:
\begin{eqnarray} \label{3.5}
	x'_i(t)=\lambda a_i(t,x_i(t))\bigg[-b_i(t,x_i(t))+f_i(t,x_t)\bigg], \Es \forall\lambda\in(0,1),\, i=1,\ldots,n.
\end{eqnarray}

Now we are in a position to prove the existence of a periodic solution of the general differential system \eqref{1.1}.
\begin{teorema}\label{teo:existencia-orbita-periodica}
	 Suppose that {(H1*)}, (H2*), (H3*), (H4*), and (H5*) hold.
	Then, system \eqref{1.1} has at least one $\omega-$periodic solution.
\end{teorema}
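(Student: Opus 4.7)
The plan is to verify the three hypotheses of Mawhin's Continuation Theorem (Theorem~\ref{mwahin}) for the operators $L$ and $N$ introduced above, taking $\Omega=\{\phi\in X:\|\phi\|<\alpha\}$ with the radius $\alpha>0$ fixed from an a priori bound on $\omega$-periodic solutions. Hypotheses (H3*)--(H5*) together with the compactness of $[0,\omega]$ guarantee that the constants
$$\gamma:=\min_{1\le i\le n}\min_{t\in[0,\omega]}\bigl(\beta_i(t)-\mathcal{L}_i(t)\bigr)>0,\qquad M:=\max_{1\le i\le n}\bigl(\ov b_i+\ov f_i\bigr)$$
are well-defined, with $\gamma$ strictly positive thanks to (H5*); both will be used repeatedly.

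For Condition~1, let $x\in\Dom_L$ satisfy $Lx=\lambda Nx$, i.e., equation~\eqref{3.5} with $\lambda\in(0,1)$. Pick $i_0$ and $t^*\in[0,\omega]$ with $|x_{i_0}(t^*)|=\|x\|$; by $\omega$-periodicity $x_{i_0}'(t^*)=0$, and since $\lambda,a_{i_0}>0$ by (H1*), this forces $b_{i_0}(t^*,x_{i_0}(t^*))=f_{i_0}(t^*,x_{t^*})$. Estimating the left-hand side from below via (H3*) (using $|b_{i_0}(t^*,x_{i_0}(t^*))-b_{i_0}(t^*,0)|\ge\beta_{i_0}(t^*)|x_{i_0}(t^*)|$) and the right-hand side from above via (H4*) (noting $\|x_{t^*}\|=\|x\|$ by periodicity) yields $\beta_{i_0}(t^*)\|x\|\le\mathcal{L}_{i_0}(t^*)\|x\|+\ov b_{i_0}+\ov f_{i_0}$, whence $\|x\|\le M/\gamma$. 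Fixing any $\alpha>M/\gamma$ then rules out $Lx=\lambda Nx$ on $\partial\Omega\cap\Dom_L$.

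For Condition~2, if $QNx=0$ for some constant $x\in\partial\Omega\cap\Ker_L$, then each coordinate integral vanishes and the continuity of the integrands yields, for every $i$, some $t^*_i\in[0,\omega]$ at which $b_i(t^*_i,x_i)=f_i(t^*_i,x)$. Applied to $i_0$ with $|x_{i_0}|=\alpha$, the preceding estimate (now with $\|x_{t^*_{i_0}}\|=|x|=\alpha$ since $x$ is constant) produces $\gamma\alpha\le M$, contradicting $\alpha>M/\gamma$. For Condition~3, I identify $\Omega\cap\Ker_L$ with $B=\{x\in\er^n:|x|<\alpha\}$ and define the homotopy
$$H(\mu,x)_i=\frac{1}{\omega}\int_0^\omega a_i(t,x_i)\bigl[-b_i(t,x_i)+\mu f_i(t,x)+(1-\mu)b_i(t,0)\bigr]\,dt,\quad\mu\in[0,1],$$
which satisfies $H(1,\cdot)=QN|_{\Ker_L}$. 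The same extremum/Lipschitz argument gives $[\beta_{i_0}(t^*_{i_0})-\mu\mathcal{L}_{i_0}(t^*_{i_0})]\,|x|\le\mu M$ whenever $H(\mu,x)=0$ on $\partial B$, which is impossible for $\alpha>M/\gamma$. At $\mu=0$, the diagonal map $H(0,\cdot)$ satisfies $H(0,x)_i\cdot x_i<0$ for $x_i\ne0$ by (H1*) and (H3*); a straight-line homotopy to $x\mapsto-x$ preserves this sign structure on $\partial B$, giving $\deg_B(H(0,\cdot),B,0)=(-1)^n$, and homotopy invariance then produces $\deg_B(QN|_{\Ker_L},B,0)=(-1)^n\ne0$.

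With all three hypotheses of Mawhin's theorem verified, $Lx=Nx$ has a solution in $\overline{\Omega}$, which is the desired $\omega$-periodic solution of \eqref{1.1}. The principal difficulty is obtaining an a priori estimate that is uniform in both the parameter $\lambda\in(0,1)$ (Condition~1) and the homotopy parameter $\mu\in[0,1]$ (Condition~3); both cases reduce to the same sign argument at an extremum of $|x_{i_0}|$, and (H5*) is precisely the hypothesis that converts that sign information into the quantitative bound $\|x\|\le M/\gamma$.
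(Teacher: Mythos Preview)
Your proof is correct and follows the same overall strategy as the paper: apply Mawhin's Continuation Theorem with the operators $L$ and $N$ already set up, and obtain the a~priori bound for Condition~1 by evaluating the equation at a point where $|x_{i_0}|$ attains its maximum. Your treatment of Conditions~1 and~2 is essentially identical to the paper's (the bound $M/\gamma$ is a repackaging of the paper's $\ov\xi$).

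The genuine difference is in the homotopy for Condition~3. The paper interpolates linearly between $QN$ and the diagonal linear map $\Phi(x)=(-\ov a_1\ov\beta_1^*x_1,\ldots,-\ov a_n\ov\beta_n^*x_n)$, and verifying that this homotopy avoids zero on $\partial\Omega$ requires a somewhat delicate case analysis using the \emph{upper} bound $\beta_i^*(t)$ from~(H3*). Your homotopy instead fades out the coupling term $f_i$ (replacing it by $b_i(t,0)$), which keeps the structure of the original map and reduces the boundary check to the same extremum estimate already used for Conditions~1 and~2; you then homotope the resulting decoupled map to $-\mathrm{id}$ via the sign relation $H(0,x)_i\cdot x_i<0$. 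This is cleaner, unifies the three conditions under a single estimate, and in fact never uses the upper bound $\beta_i^*$ in~(H3*) at all---so your argument actually proves the theorem under the slightly weaker one-sided hypothesis (H3). The paper's route, by contrast, buys an explicit linear endpoint for the degree computation at the cost of that extra hypothesis and a longer verification.
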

\begin{proof}
	Our objective is to apply Theorem \ref{mwahin}. To accomplish this, it is needed to define a bounded open set $\Omega\subseteq X$ for which the conditions 1., 2., and 3. in Theorem \ref{mwahin} hold.
	
	Let $x=x(t)=(x_1(t),\dots,x_n(t))^T$ be an arbitrary $\omega-$periodic solution of equation \eqref{3.5}. The components $x_i(t)$ of $x(t)$ are all continuously differentiable, thus, for each $i=1,\ldots,n$, there is $t_i\in [0,\omega]$ such that
	$$
	  |x_i(t_i)|= \max_{t \in  [0,\omega]}|x_i(t)|.
	$$
	Hence $x'_i(t_i)=0$ for all $i=1,\dots,n$.
	
	Choose $i\in\{1,\ldots,n\}$ such that $|x_i(t_i)|=\dst\max_{t\in[0,\omega]}|x(t)|$.
	Consequently, from  \eqref{3.5}, we have
	\begin{eqnarray} \label{3.6}
		b_i(t_i,x_i(t_i))=f_i(t_i,x_{t_i}),
	\end{eqnarray}
    thus
    \begin{eqnarray*}
    	b_i(t_i,x_i(t_i))-b_i(t_i,0)+b_i(t_i,0)=f_i(t_i,x_{t_i})-f_i(t_i,0)+f_i(t_i,0).
    \end{eqnarray*}
    By (H3*), (H4*), and \eqref{def-sup-coef} we obtain
     \begin{eqnarray*}
    	\beta_i(t_i)|x_i(t_i)|-\ov{b}_i\leq\mathcal{L}_i(t_i)\|x_{t_i}\|+\ov{f}_i,
    \end{eqnarray*}
    and, as $\|x_{t_i}\|=|x(t_i)|=|x_i(t_i)|$, we get
    $$
      |x_i(t_i)|\left(1-\frac{\mathcal{L}_i(t_i)}{\beta_i(t_i)}\right)\leq\frac{\ov{f}_i+\ov{b}_i}{\beta_i(t_i)},
    $$
    From (H2*), (H5*), and \eqref{def-sup-coef}, we can define
    \begin{eqnarray}\label{def-xi}
      \ov\xi=\max_{j,t}\left\{\left(1-\frac{\mathcal{L}_j(t)}{\beta_j(t)}\right)^{-1}\frac{\ov{f}+\ov{b}}{\un{\beta}}\right\}+1>0,
    \end{eqnarray}
    where $\ov{b}=\dst\max_i\ov{b}_i$, $\ov{f}=\dst\max_i\ov{f}_i$, and $\un{\beta}=\dst\min_i\un{\beta}_i$, thus we conclude that
    \begin{eqnarray}\label{ineq-con-1}
    |x_i(t_i)|<\ov\xi.
   \end{eqnarray}
    Consequently, $\|x\|<\ov\xi$, and taking
    \begin{eqnarray}\label{def-omega}
      \Omega=\big\{\phi\in X: \|\phi\|<\ov\xi\big\},
    \end{eqnarray}
    we conclude that the first condition of Theorem \ref{mwahin} is satisfied.

    Now, we prove that the second condition of Theorem \ref{mwahin} holds.

    Let $x=x(t)=(x_1(t),\dots,x_n(t))^T\in\partial \Omega\cap \Ker_L$. As  $\Ker_L\cong\er^n$, then $x(t)$ is a constant vector in $\er^n$, i.e. $x(t)=(x_1,\ldots,x_n)$, and by \eqref{def-omega}, we conclude that there is $i\in\{1,\dots,n\}$ such that $|x_i|=\ov{\xi}$. By \eqref{def-P} and \eqref{3.3}, we have
    \begin{eqnarray*}
    	(QNx)_i(t)=(QNx)_i=\frac{1}{\omega}\int_{0}^{\omega}a_i(u,x_i)\left[-b_i(u,x_i)+f_i(u,x)\right]du.
    \end{eqnarray*}
    We claim that \begin{eqnarray}\label{3.12}
    	|(QNx)_i|>0.
    \end{eqnarray}
    By contradiction, we assume that $|(QNx)_i|=0.$ Then there is $t^*_i\in [0,\omega]$ such that
    $$
      b_i(t_i^*,x_i)=f_i(t_i^*,x).
    $$
    Reproducing the same computations above (see how \eqref{3.6} implies \eqref{ineq-con-1}), we conclude that
    $$
      \ov\xi=|x_i|<\ov\xi,
    $$
    which is a contradiction. Consequently, \eqref{3.12} holds and the second condition of Theorem \ref{mwahin} is proved.

    In order to prove the last condition of Theorem \ref{mwahin}, we consider the continuous function $\Psi:\left(\Omega\cap \Ker_L\right)\times[0,1]\to X$ defined by $\Psi(x,\mu)=\left(\Psi(x,\mu)_1,\ldots,\Psi(x,\mu)_n\right)$ with
    $$
      \Psi(x,\mu)_i=-\mu\ov{a}_i\ov\beta_i^*x_i+(1-\mu)(QNx)_i,
    $$
    for all $x=(x_1,\ldots,x_n)\in\Omega\cap \Ker_L\cong\Omega\cap\er^n$, $\mu\in(0,1)$, and $i=1,\ldots,n$.
    We claim that
    \begin{eqnarray}\label{claim_no_zero}
    	|\Psi(x,\mu)|\neq0,\Es\forall x\in(\partial\Omega)\cap \Ker_L,\,\mu\in[0,1].
    \end{eqnarray}
    Consequently, defining $\Phi:\er^n\to\er^n$ by
    $$
      \Phi x=\left(-\ov{a}_1\ov{\beta}^*_1x_1,\ldots,-\ov{a}_n\ov{\beta}^*_nx_n\right),\Es\forall x=(x_1,\ldots,x_n)\in\er^n,
    $$
    the homotopy invariance theorem \cite{mawhin2005periodic} implies that
    $$
      \deg_B\left\{QN,\Omega\cap \Ker_L,0\right\}=\deg_B\left\{\Phi,\Omega\cap \Ker_L,0\right\}\neq0.
    $$
    Now, it remains to prove that \eqref{claim_no_zero} holds to conclude the proof.

    Let $x=(x_1,\ldots,x_n)\in (\partial\Omega)\cap \Ker_L$ and $\mu\in[0,1]$. The function $x$ is constant because $\Ker\cong\er^n$  and, by \eqref{def-omega}, we conclude that there  is $i\in\{1,\ldots,n\}$ such that $|x|=|x_i|=\ov\xi$. We claim that
    $$
    |  \Psi(x,\mu)_i|\neq0.
    $$
     By contradiction assume that
     \begin{eqnarray}\label{eq-abs}
     	|\Psi(x,\mu)_i|=0.
     \end{eqnarray}
     From  \eqref{def-P}, \eqref{3.3}, and \eqref{eq-abs}, we have
    $$
      -\mu\ov{a}_i\ov\beta_i^*x_i+\frac{1-\mu}{\omega}\int_0^\omega a_i(t,x_i)\big[-b_i(t,x_i)+f_i(t,x)\big]dt=0,
    $$
    thus there exists $t^{**}_i\in[0,\omega]$ such that
    \begin{eqnarray}\label{eq-degree-zero}
    -\mu\ov{a}_i\ov\beta_i^*x_i+(1-\mu) a_i(t^{**}_i,x_i)\big[-b_i(t^{**}_i,x_i)+f_i(t^{**}_i,x)\big]=0.
    \end{eqnarray}
Now, we assume that $|x|=x_i=\ov\xi>0$ (the situation $|x|=-x_i=\ov\xi$ is analogous).

{By condition (H1*) and (H3*),  we have}
\begin{align*}
	a_i(t^{**}_i,x_i)b_i(t^{**}_i,x_i)&=a_i(t^{**}_i,x_i)\big[b_i(t^{**}_i,x_i)-b_i(t^{**}_i,0)\big]+a_i(t^{**}_i,x_i)b_i(t^{**}_i,0)\\
	&\leq \ov{a}_i\ov\beta^*_ix_i+a_i(t^{**}_i,x_i)b_i(t^{**}_i,0),
\end{align*}
then
$$
a_i(t^{**}_i,x_i)b_i(t^{**}_i,x_i)- \ov{a}_i\ov{\beta}^*_ix_i-a_i(t^{**}_i,x_i)b_i(t^{**}_i,0)\leq 0.
$$
Consequently, from \eqref{eq-degree-zero}, we have
\begin{align*}
	-a_i&(t^{**}_i,x_i)b_i(t^{**}_i,x_i)+(1-\mu)a_i(t^{**}_i,x_i)f_i(t^{**}_i,x) \\
	\geq&\mu\left[a_i(t^{**}_i,x_i)b_i(t^{**}_i,x_i)- \ov{a}_i\ov{\beta}^*_ix_i-a_i(t^{**}_i,x_i)b_i(t^{**}_i,0)\right]-a_i(t^{**}_i,x_i)b_i(t^{**}_i,x_i)\\
	&+(1-\mu)a_i(t^{**}_i,x_i)f_i(t^{**}_i,x)\\
	=&-\mu\ov{a}_i\ov{\beta}^*_ix_i+(1-\mu)a_i(t^{**}_i,x_i)\left[-b_i(t^{**}_i,x_i)+f_i(t^{**}_i,x)\right]-\mu a_i(t^{**}_i,x_i)b_i(t^{**}_i,0)\\
	=&-\mu a_i(t^{**}_i,x_i)b_i(t^{**}_i,0)\\
	\geq&a_i(t^{**}_i,x_i)\min\big\{0,-b_i(t^{**}_i,0)\big\}
\end{align*}
and {by (H1*)}, we obtain
$$
  -b_i(t^{**}_i,x_i)+(1-\mu)f_i(t^{**}_i,x)\geq\min\big\{0,-b_i(t^{**}_i,0)\big\}.
$$
Consequently,
$$
b_i(t^{**}_i,x_i)-b_i(t^{**}_i,0)\leq|f_i(t^{**}_i,x)-f_i(t^{**}_i,0)|+\ov{b}_i+\ov{f}_i,
$$
 recalling that $x_i>0$, and $\|x\|=|x|$, from (H3*), (H4*), and \eqref{def-sup-coef} we have
$$
  x_i\leq\frac{\mathcal{L}_i(t^{**}_i)}{\beta_i(t^{**}_i)}|x|+\frac{\ov{b}_i+\ov{f}_i}{\un\beta_i}.
$$
As $|x|=x_i=\ov\xi>0$, we obtain
$$
  \ov{\xi}=x_i\leq\left(1-\frac{\mathcal{L}_i(t^{**}_i)}{\beta_i(t^{**}_i)}\right)^{-1}\frac{\ov{b}_i+\ov{f}_i}{\un\beta_i},
$$
and by \eqref{def-xi} we conclude {that}
$$
   \ov{\xi}=x_i\leq\left(1-\frac{\mathcal{L}_i(t^{**}_i)}{\beta_i(t^{**}_i)}\right)^{-1}\frac{\ov{b}_i+\ov{f}_i}{\un\beta_i}<\ov\xi,
$$
which is a contradiction.

{The case when $x_i<0$ is very similar to the previous one and we present it briefly. From (H1*), (H3*), and \eqref{def-sup-coef}, we obtain
$$
a_i(t^{**}_i,x_i)b_i(t^{**}_i,x_i)- \ov{a}_i\ov{\beta}^*_ix_i-a_i(t^{**}_i,x_i)b_i(t^{**}_i,0)\geq 0
$$
and, from {(H1*)}, (H5*), and \eqref{def-xi}, we obtain
$$
-b_i(t^{**}_i,x_i)+(1-\mu)f_i(t^{**}_i,x)\leq\max\big\{0,-b_i(t^{**}_i,0)\big\}.
$$
Therefore,
$$
b_i(t^{**}_i,x_i)-b_i(t^{**}_i,0)\geq-|f_i(t^{**}_i,x)-f_i(t^{**}_i,0)|-\ov{b}_i-\ov{f}_i.
$$
Since $x_i<0$ and $\|x\|_\epsilon=|x|$, from (H3*), (H4*), \eqref{def-sup-coef} and taking into account that $|x|=-x_i=-\ov\xi<0$, we obtain
$$
x_i\geq\frac{\mathcal{L}_i(t^{**}_i)}{\beta_i(t^{**}_i)}x_i-\frac{\ov{b}_i+\ov{f}_i}{\un\beta_i}.
$$
Using this last equation and \eqref{def-xi}, we conclude that
$$
-\ov{\xi}=x_i\geq-\left(1-\frac{\mathcal{L}_i(t^{**}_i)}{\beta_i(t^{**}_i)}\right)^{-1}\frac{\ov{b}_i+\ov{f}_i}{\un\beta_i}>-\ov\xi,
$$
and we obtain again a contradiction.}
\end{proof}

By the stability criteria established in the previous Section, now we are in a position to present the following results.

From Theorems \ref{theorem 2.2} and \ref{teo:existencia-orbita-periodica}, we have the following result.
\begin{teorema}\label{teo:conjunto-modelo-geral}
	Assume {(H1*), (H2*)}, (H2), (H3*), (H4) with $\mathcal{L}_i$ $\omega-$periodic continuous functions, (H5*), and (H5).
	Then the system \eqref{1.1} has an $\omega-$periodic solution which is globally exponentially stable.
\end{teorema}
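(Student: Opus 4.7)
The plan is to combine the two main theorems of the paper: Theorem~\ref{teo:existencia-orbita-periodica} to produce an $\omega$-periodic solution, and Theorem~\ref{theorem 2.2} to ensure that this periodic solution attracts every other solution exponentially. Since global exponential stability in the sense of Definition~\ref{defn 2.1} compares arbitrary pairs of solutions of \eqref{1.1}, once we know that at least one solution is $\omega$-periodic, Theorem~\ref{theorem 2.2} automatically upgrades it to a globally exponentially stable periodic solution. So the work reduces almost entirely to bookkeeping of hypotheses.

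First, I would verify that the hypotheses of the existence theorem are satisfied. The assumptions (H1*), (H2*), (H3*), and (H5*) are imposed directly. The only nontrivial verification is that (H4), stated with the norm $\|\cdot\|_\epsilon$, implies (H4*), stated with the $BC$-norm $\|\cdot\|$. This is immediate from the inclusion $BC\subseteq UC^n_\epsilon$ together with the inequality $\|\phi\|_\epsilon\leq\|\phi\|$ for $\phi\in BC$ noted in Section~2, because for $\phi,\psi\in BC$ the difference $\phi-\psi$ lies in $BC$, so
$$
   |f_i(t,\phi)-f_i(t,\psi)|\leq\mathcal{L}_i(t)\|\phi-\psi\|_\epsilon\leq\mathcal{L}_i(t)\|\phi-\psi\|,
$$
and the periodicity of $\mathcal{L}_i$ is part of the assumption. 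Theorem~\ref{teo:existencia-orbita-periodica} then yields an $\omega$-periodic solution $x^*$ of \eqref{1.1}.

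Next, I would check the hypotheses of Theorem~\ref{theorem 2.2}. Hypothesis (H1) coincides with (H1*), hypothesis (H2) is assumed, (H3) is the lower-bound half of (H3*), (H4) is assumed, and (H5) is assumed. Hence Theorem~\ref{theorem 2.2} gives constants $\delta>0$ (namely $\epsilon$) and $C\geq1$ such that any two solutions of \eqref{1.1} starting from bounded data satisfy
$$
   |x(t,t_0,\phi)-x(t,t_0,\psi)|\leq C\,\Ne^{-\delta(t-t_0)}\|\phi-\psi\|
$$
for all $t\geq t_0\geq0$.

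Applying this estimate with $\psi=x^*_{t_0}$ (which belongs to $BC$ by continuity and $\omega$-periodicity of $x^*$) and with $\phi$ an arbitrary element of $BC$, we obtain
$$
   |x(t,t_0,\phi)-x^*(t)|\leq C\,\Ne^{-\delta(t-t_0)}\|\phi-x^*_{t_0}\|,\Es\forall t\geq t_0\geq0,
$$
which is precisely the global exponential stability of $x^*$. I do not anticipate any real obstacle here: the only point that requires a moment of thought is the passage from (H4) to (H4*) via $\|\cdot\|_\epsilon\leq\|\cdot\|$ on $BC$; everything else is a direct invocation of the two previously proved results.
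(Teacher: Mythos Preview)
Your proposal is correct and follows exactly the approach the paper intends: the paper's own proof is the single line ``From Theorems~\ref{theorem 2.2} and~\ref{teo:existencia-orbita-periodica}, we have the following result,'' and you have spelled out the hypothesis-matching (in particular, that (H4) with $\|\cdot\|_\epsilon$ implies (H4*) with $\|\cdot\|$ via $\|\phi\|_\epsilon\leq\|\phi\|$ on $BC$, and that (H1*), (H3*) cover (H1), (H3)) that the paper leaves implicit.
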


In the case of $D_i(t)\leq0$, for all $t\geq0$ and $i=1,\ldots,n$, hypothesis (H5) implies (H5*), thus the following result is an immediate consequence of Theorem \ref{teo:conjunto-modelo-geral}.
\begin{corolario}\label{cor:conjunto-modelo-geral}
If {(H1*), (H2*)}, (H2) with $D_i(t)\leq0$, for all $t\geq0$ and $i=1,\ldots,n$, (H3*), (H4) with $\mathcal{L}_i$ $\omega-$periodic continuous functions, and (H5) hold, then the system \eqref{1.1} has an $\omega-$periodic solution which is globally exponentially stable.
\end{corolario}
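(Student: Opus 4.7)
The plan is to deduce this Corollary from Theorem~\ref{teo:conjunto-modelo-geral} by checking that, under the extra sign assumption $D_i(t)\leq 0$, the hypothesis (H5) already forces (H5*). Every other hypothesis of Theorem~\ref{teo:conjunto-modelo-geral}---namely (H1*), (H2*), (H2), (H3*), and (H4) with the $\mathcal{L}_i$ being $\omega$-periodic---is explicitly listed among the assumptions of the Corollary, so once (H5*) is in hand the conclusion is immediate and no further work is needed.

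To derive (H5*), I would fix $i\in\{1,\ldots,n\}$ and $t\geq 0$ and start from (H5), which reads
$$
\un{a}_i\bigl(\beta_i(t)+D_i(t)\bigr)-\ov{a}_i\mathcal{L}_i(t)>\epsilon.
$$
Since $\un{a}_i>0$ by (H1*) and $D_i(t)\leq 0$ by assumption, the term $\un{a}_i D_i(t)$ is non-positive; dropping it only strengthens the left-hand side, which yields $\un{a}_i\beta_i(t)>\ov{a}_i\mathcal{L}_i(t)+\epsilon$. Dividing by $\un{a}_i$ and recalling that $\ov{a}_i\geq\un{a}_i$ (by (H1*)) together with $\mathcal{L}_i(t)\geq 0$ (by (H4)), I obtain $\beta_i(t)>(\ov{a}_i/\un{a}_i)\,\mathcal{L}_i(t)\geq\mathcal{L}_i(t)$, which is precisely (H5*). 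Restricting $t$ to $[0,\omega]$, as (H5*) requires, is trivially included.

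With (H5*) verified, a direct application of Theorem~\ref{teo:conjunto-modelo-geral} produces an $\omega$-periodic solution of \eqref{1.1} that is globally exponentially stable in the sense of Definition~\ref{defn 2.1}. There is no real obstacle in this argument: the sign condition on $D_i$ is precisely what is needed to bridge the gap between (H5) and (H5*), and everything else has already been done in the preceding theorems.
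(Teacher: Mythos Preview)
Your argument is correct and follows exactly the route taken in the paper: the paper simply notes that when $D_i(t)\leq 0$ the inequality (H5) implies (H5*), so the result is an immediate consequence of Theorem~\ref{teo:conjunto-modelo-geral}. Your proof spells out that implication in a little more detail than the paper, but the strategy is identical.
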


In {the} particular case of functions $a_i$ {that} do not explicitly depend on time $t$, from the Corollary \ref{cor:conjunto-modelo-geral}, we have the following result.
\begin{corolario}
	If {(H1*)}, (H3*), (H4) with $\mathcal{L}_i$ $\omega-$periodic continuous functions, and (H5) hold, then system \eqref{eq:modelo-funcao-amp-ind-t} has an $\omega-$periodic solution which is globally exponentially stable.
\end{corolario}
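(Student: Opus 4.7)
The plan is to derive this corollary directly from Corollary~\ref{cor:conjunto-modelo-geral} by verifying that each of its hypotheses holds for system~\eqref{eq:modelo-funcao-amp-ind-t}. Since the amplification functions are of the form $a_i(t,u)=a_i(u)$, the partial derivative $\partial a_i/\partial t$ is identically zero; consequently, hypothesis (H2) is satisfied with the choice $D_i(t)\equiv 0$ for every $i=1,\ldots,n$, and this choice automatically fulfills the additional restriction $D_i(t)\leq 0$ required by Corollary~\ref{cor:conjunto-modelo-geral}. The $a_i$-periodicity clause of (H2*) is also immediate, since $a_i(t+\omega,u)=a_i(u)=a_i(t,u)$ trivially; combined with the implicit $\omega$-periodicity of $b_i$ and $f_i$ inherited from the periodic framework of this section, (H2*) is in force.

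The remaining assumptions of Corollary~\ref{cor:conjunto-modelo-geral}, namely (H1*), (H3*), (H4) with $\mathcal{L}_i$ $\omega$-periodic, and (H5), are exactly the ones listed in the present statement. Hence all hypotheses of Corollary~\ref{cor:conjunto-modelo-geral} are satisfied, and the existence of an $\omega$-periodic solution of \eqref{eq:modelo-funcao-amp-ind-t} which is globally exponentially stable follows at once.

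Because the argument reduces to a trivial hypothesis check once the observation that $D_i\equiv 0$ is a valid selection has been made, there is no substantial obstacle in the proof; the only point worth stating explicitly is that the vanishing of $\partial a_i/\partial t$ forces the $D_i$ appearing in (H2) to be permitted to equal zero, which then makes (H5) sufficient on its own (no interaction between (H5) and (H5*) is needed, as in the general case treated in Corollary~\ref{cor:conjunto-modelo-geral}).
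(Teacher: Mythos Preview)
Your argument is correct and matches the paper's approach: the paper does not give an explicit proof of this corollary but simply states that it follows from Corollary~\ref{cor:conjunto-modelo-geral} in the particular case where the amplification functions do not depend on $t$, and your verification that $D_i\equiv 0$ is an admissible choice (so that (H2) holds with $D_i(t)\leq 0$) together with the trivial $\omega$-periodicity of $a_i$ is exactly the observation needed.
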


Now, we assume that the system \eqref{model_jj_NN_2017} is $\omega-$periodic, i.e. the following hypothesis holds:
\begin{description}
	\item[{(h1*)}]  There is $\omega>0$ such that, for each $i,j=1,\ldots,n$ and $k=1,\ldots,K$,
	$$
	{a_i(t,u)=a_i(t+\omega,u)},\Es b_i(t,u)=b_i(t+\omega,u),\Es f_{ijk}(t,\phi)=f_{ijk}(t+\omega,\phi),
	$$
	for all $t\geq0$, $u\in\er$, and $\phi\in BC$.
\end{description}
From Corollary \ref{cor:model-jj17}, Remark \ref{rem:criterio-estabilidade-jjNN17}, and Theorem \ref{teo:existencia-orbita-periodica}, we obtain the next result.
\begin{teorema}
	Assume {(h1*)}, (H1*), (H2), (H3*), (h4) with $\mathcal{F}_{ijk}$ $\omega-$periodic continuous functions, and
	$$
	  \beta_i(t)>\sum_{k=1}^K\sum_{j=1}^n\mathcal{F}_{ijk}(t),\Es\forall t\in[0,\omega],\,i=1,\ldots,n.
	$$
	If one of the conditions {(h5') or (h5'')} holds, then the system \eqref{model_jj_NN_2017} has an $\omega-$periodic solution which is globally exponentially stable.
	
\end{teorema}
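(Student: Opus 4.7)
The plan is to view~\eqref{model_jj_NN_2017} as a particular instance of the general model~\eqref{1.1}, invoke Theorem~\ref{teo:existencia-orbita-periodica} to secure an $\omega$-periodic solution, and then appeal either to Corollary~\ref{cor:model-jj17} or to the criterion of~\cite{oliveira2017global} recalled in Remark~\ref{rem:criterio-estabilidade-jjNN17} to obtain global exponential stability.

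Exactly as in the proof of Corollary~\ref{cor:model-jj17}, the substitution
$$
  f_i(t,\varphi)=\sum_{k=1}^K\sum_{j=1}^n f_{ijk}(t,\varphi_j),\qquad\varphi=(\varphi_1,\ldots,\varphi_n)\in UC_\epsilon^n,
$$
identifies~\eqref{model_jj_NN_2017} with~\eqref{1.1}. For the existence part I would then simply verify the hypotheses of Theorem~\ref{teo:existencia-orbita-periodica} one by one: (H1*) and (H3*) are assumed outright; (H2*) follows from (h1*), since a finite sum of $\omega$-periodic continuous functions is $\omega$-periodic in $t$; (h4) with $\omega$-periodic $\mathcal{F}_{ijk}$ yields (H4*) for the $\omega$-periodic Lipschitz function $\mathcal{L}_i(t)=\sum_{k=1}^K\sum_{j=1}^n\mathcal{F}_{ijk}(t)$; and the standing hypothesis $\beta_i(t)>\sum_{k,j}\mathcal{F}_{ijk}(t)$ on $[0,\omega]$ is exactly (H5*) for this choice of $\mathcal{L}_i$. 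Theorem~\ref{teo:existencia-orbita-periodica} then produces an $\omega$-periodic solution of~\eqref{model_jj_NN_2017}.

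For stability I would split on which of (h5') or (h5'') is at hand. Under (h5'), Corollary~\ref{cor:model-jj17} applies directly, because (H1), (H2), (H3), and (h4) follow from (H1*), (H2), (H3*), and (h4) respectively. Under (h5''), the same remaining hypotheses support the exponential stability criterion of~\cite{oliveira2017global} recalled in Remark~\ref{rem:criterio-estabilidade-jjNN17}. Either way, the $\omega$-periodic orbit constructed in the previous step is the object whose global exponential stability is asserted.

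I do not anticipate any genuine obstacle: the whole argument amounts to rebranding the hypotheses so that each previously established result applies. The only minor points worth attention are the transition between the $BC$-norm appearing in~(H4*) and the $UC_\epsilon^n$-norm appearing in~(h4), which is harmless because $\|\cdot\|_\epsilon\leq\|\cdot\|$ on $BC$, and the observation encoded in~\eqref{def-sup-coef} that a positive $\omega$-periodic continuous $\beta_i$ is uniformly bounded below by $\un\beta_i>0$, so that the weaker positivity demanded by~(H3) is automatic.
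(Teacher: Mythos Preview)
Your proposal is correct and follows exactly the route indicated in the paper, which simply states that the result is obtained from Corollary~\ref{cor:model-jj17}, Remark~\ref{rem:criterio-estabilidade-jjNN17}, and Theorem~\ref{teo:existencia-orbita-periodica} without spelling out the details. Your verification of (H2*), (H4*), and (H5*) via the sum $\mathcal{L}_i=\sum_{k,j}\mathcal{F}_{ijk}$, together with the observation that $\|\cdot\|_\epsilon\leq\|\cdot\|$ on $BC$, is precisely what is needed to make the implicit argument explicit.
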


\section{Applications to Cohen-Grossberg neural network models}\label{aplicacoes}

In this section, we apply the results in Sections \ref{section 2} and \ref{not+basicResults} to Cohen-Grossberg type models. As we want to apply it to low-order and high-order models, we consider the following general Cohen-Grossberg model with discrete-time varying and distributed delays.
 \begin{eqnarray} \label{3.1}
	x'_i(t)&=&a_i(t,x_i(t))\bigg[-b_i(t,x_i(t))+F_i\bigg(\dst\sum_{p=1}^{P}\sum_{j,l=1}^{n}c_{ijlp}(t)h_{ijlp}\big(x_j(t-\tau_{ijp}(t)),x_l(t-{\til{\tau}}_{ilp}(t)\big)\bigg)\nonumber\\
	& &+G_i\bigg(\dst\sum_{q=1}^{Q}\sum_{j,l=1}^{n}d_{ijlq}(t)f_{ijlq}\left(\int_{-\infty}^0g_{ijq}(x_j(t+s))d\eta_{ijq}(s),\int_{-\infty}^0{\til{g}}_{ilq}(x_l(t+s))d{\til{\eta}}_{ilq}(s)\right)\bigg)\nonumber\\
	& &+I_i(t)\bigg],\Es t\geq 0,\Es \,i=1,\ldots,n,
\end{eqnarray}
where $n,P,Q\in\en$ and $a_i:[0,+\infty)\times\er\to(0,+\infty)$, $b_i:[0,+\infty)\times\er\to\er$,  $c_{ijlp}\,,d_{ijlq},I_i:[0,+\infty)\to\er$, $\tau_{ijp},\til{\tau}_{ilp}:[0,+\infty)\to[0,+\infty)$,  $h_{ijlp}\,,f_{ijlq}:\er^2\to\er$, $F_i, G_i, g_{ijq}, \til{g}_{ilq}:\er\to\er$ are continuous functions, and $\eta_{ijq},\,\til{\eta}_{ilq}:(-\infty,0]\to\er$ are non-decreasing bounded functions such that $\eta_{ijq}(0)-\eta_{ijq}(-\infty)=1$ and  $\til{\eta}_{ilq}(0)-\til{\eta}_{ilq}(-\infty)=1$, for each $i,j,l=1,\dots,n$, $p=1,\dots,P$, and $q=1,\dots,Q$.

Here, we assume the next Lipschitz conditions:
\begin{description}
	\item[(H4**)] For each $i,j,l=1,\dots,n$, $p=1,\dots,P$, and $q=1,\dots,Q$, there are positive numbers $\gamma^{(1)}_{ijlp}$, $\gamma^{(2)}_{ijlp}$, $\mu^{(1)}_{ijlq}$,  $\mu^{(2)}_{ijlq}$, $\xi_{ijq}$, $\til{\xi}_{ilq}$, $\zeta_{i}$, and $\varsigma_i$ such that
	\begin{eqnarray*}
		|h_{ijlp}(u_1,u_2)-h_{ijlp}(v_1,v_2)|\leq \gamma^{(1)}_{ijlp}|u_1-v_1|+\gamma^{(2)}_{ijlp}|u_2-v_2| \nonumber\\
		|f_{ijlq}(u_1,u_2)-f_{ijlq}(v_1,v_2)|\leq \mu^{(1)}_{ijlq}|u_1-v_1|+\mu^{(2)}_{ijlq}|u_2-v_2| \nonumber\\
	\end{eqnarray*}
	for all $u_1,u_2,v_1,v_2\in\er$, and
	$$
	\begin{array}{ll}
		|g_{ijq}(u)-g_{ijq}(v)|\leq \xi_{ijq}|u-v|,
		&|\til{g}_{ilq}(u)-\til{g}_{ilq}(v)|\leq \til{\xi}_{ilq}|u-v|,\\
		|F_i(u)-F_i(v)|\leq \zeta_i|u-v|,&
		|G_{i}(u)-G_{i}(v)|\leq \varsigma_{i}|u-v|,
	\end{array}
	$$
	for all $u,v\in\er$.
\end{description}

	Now, we state our main stability criterion for model \eqref{3.1}.
\begin{teorema}\label{th 3.1}
   Assume that (H1)-(H3), (H4**), the functions $\tau_{ijp},\til\tau_{ijp}$ are bounded, and there exists $\vartheta>0$ such that
   \begin{eqnarray}\label{4.2}
   	\int_{-\infty}^{0}\Ne^{-\vartheta s}d\eta_{ijq}(s)<+\infty,\Es \int_{-\infty}^{0}\Ne^{-\vartheta s}d\til{\eta}_{ilq}(s)<+\infty.
   \end{eqnarray}
	If there exist $\varepsilon>0$ and $w=(w_1,\ldots,w_n)>0$ such that for all $t\geq0$, and $i=1,\ldots,n$,	
	\begin{eqnarray}\label{4.3}
		\un{a}_i\big(\beta_i(t)+D_i(t)\big)-\ov{a}_i\sum_{j,l=1}^{n}\lefteqn{\left[\dst\sum_{p=1}^{P}\zeta_i|c_{ijlp}(t)|\left(\frac{w_j}{w_i}\gamma^{(1)}_{ijlp}+\frac{w_l}{w_i}\gamma^{(2)}_{ijlp}\right)\right.}\nonumber\\
		&+\dst\left.\sum_{q=1}^{Q}\varsigma_i|d_{ijlq}(t)|\left(\frac{w_j}{w_i}\mu^{(1)}_{ijlq}\xi_{ijq}+\frac{w_l}{w_i}\mu^{(2)}_{ijlq}\til{\xi}_{ilq}\right)\right]>\varepsilon,
	\end{eqnarray}
	then the model \eqref{3.1} is globally exponentially stable.
\end{teorema}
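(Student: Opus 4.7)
The plan is to reduce model~\eqref{3.1} to the abstract form~\eqref{1.1} and then apply Theorem~\ref{theorem 2.2}. The weights $w_i$ appearing in~\eqref{4.3} suggest the change of variable $y_i(t)=x_i(t)/w_i$, which transforms~\eqref{3.1} into $y'_i(t)=\hat a_i(t,y_i(t))[-\hat b_i(t,y_i(t))+\hat f_i(t,y_t)]$ with $\hat a_i(t,u)=a_i(t,w_iu)/w_i$, $\hat b_i(t,u)=b_i(t,w_iu)$, and $\hat f_i(t,\phi)=f_i(t,w\phi)$, where $(w\phi)(s):=(w_1\phi_1(s),\ldots,w_n\phi_n(s))$. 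Hypotheses (H1)--(H3) carry over to this rescaled system with $\un{\hat a}_i=\un a_i/w_i$, $\ov{\hat a}_i=\ov a_i/w_i$, $\hat D_i(t)=w_i D_i(t)$, and $\hat\beta_i(t)=w_i\beta_i(t)$, by straightforward computation.

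The main technical step is to exhibit a Lipschitz function $\hat{\mathcal L}_i(t)$ for $\hat f_i$ in the norm $\|\cdot\|_\epsilon$. For the discrete-delay block involving $F_i$, I would apply the Lipschitz bounds from (H4**) and use $|\phi_j(-\tau)-\psi_j(-\tau)|\leq \Ne^{\epsilon\tau}\|\phi-\psi\|_\epsilon$, which extracts factors $\Ne^{\epsilon\tau_{ijp}(t)}$ and $\Ne^{\epsilon\til\tau_{ilp}(t)}$ that are uniformly bounded in $t$ because the discrete delays are bounded. For the distributed-delay block involving $G_i$, the estimate $|\phi_j(s)-\psi_j(s)|\leq\Ne^{-\epsilon s}\|\phi-\psi\|_\epsilon$ produces integral factors $\int_{-\infty}^{0}\Ne^{-\epsilon s}\,d\eta_{ijq}(s)$ and $\int_{-\infty}^{0}\Ne^{-\epsilon s}\,d\til\eta_{ilq}(s)$, which are finite whenever $\epsilon\leq\vartheta$ by~\eqref{4.2}. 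Multiplying the resulting $\hat{\mathcal L}_i(t)$ by $\ov{\hat a}_i=\ov a_i/w_i$ cancels the outer $1/w_i$ and converts the inner $w_j$ and $w_l$ into the ratios $w_j/w_i$ and $w_l/w_i$ that appear in~\eqref{4.3}, decorated by the exponential and integral factors above.

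The conclusion is then obtained by choosing $\epsilon>0$ small enough. At $\epsilon=0$ all factors $\Ne^{\epsilon\tau}$ equal $1$ and, by the normalization $\eta_{ijq}(0)-\eta_{ijq}(-\infty)=1$ (similarly for $\til\eta_{ilq}$), so do the distributed integrals. Hence $\un{\hat a}_i(\hat\beta_i(t)+\hat D_i(t))-\ov{\hat a}_i\hat{\mathcal L}_i(t)$ at $\epsilon=0$ coincides exactly with the left-hand side of~\eqref{4.3}, which by assumption exceeds $\varepsilon>0$ for every $t\geq 0$. Since $\Ne^{-\epsilon s}$ is monotone in $\epsilon$ for $s\leq 0$, monotone convergence (with an integrable majorant on $[0,\vartheta]$ supplied by~\eqref{4.2}) together with the uniform bound on the discrete delays makes the perturbation continuous in $\epsilon$ \emph{uniformly} in $t$. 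Thus I can fix $\epsilon\in(0,\min\{\vartheta,\varepsilon\})$ small enough so that $\un{\hat a}_i(\hat\beta_i(t)+\hat D_i(t))-\ov{\hat a}_i\hat{\mathcal L}_i(t)>\epsilon$ for all $t\geq 0$, which is exactly (H5) for the $y$-system. Theorem~\ref{theorem 2.2} then gives global exponential stability of the $y$-system, and the bi-Lipschitz linear isomorphism $y\mapsto(w_1y_1,\ldots,w_ny_n)$ transfers that stability back to~\eqref{3.1}.

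The main obstacle I anticipate is precisely this uniformity in $t$: one must ensure that the exponential and the distributed-delay integrals approach their limits as $\epsilon\downarrow 0$ uniformly in $t$. Boundedness of $\tau_{ijp}$ and $\til\tau_{ilp}$ covers the discrete side, and~\eqref{4.2} controls the tail of the kernels $\eta_{ijq}$, $\til\eta_{ilq}$ independently of $t$; both assumptions are essential and explain why they appear in the hypotheses.
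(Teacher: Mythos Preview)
Your proposal is correct and follows essentially the same route as the paper: rescale by $w_i$, verify (H1)--(H5) for the transformed system in $UC^n_\epsilon$, and invoke Theorem~\ref{theorem 2.2}. The paper differs only cosmetically---it places the factor $w_i^{-1}$ in $\tilde b_i,\tilde f_i$ rather than in $\hat a_i$ (so $\un a_i,\ov a_i,D_i,\beta_i$ are unchanged after rescaling), and instead of your continuity argument it fixes $\epsilon=\min\{\nu,\alpha,\log(1+\nu)/(\tau+1)\}$ explicitly after first extracting from \eqref{4.3} a multiplicative slack $(1+\nu)$ and then choosing $\alpha$ so that the kernel integrals are bounded by $1+\nu$.
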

\begin{proof}
	With the change of variables $y_i(t)=w_i^{-1}x_i(t)$, model \eqref{3.1} is transformed into
	 \begin{eqnarray} \label{3.1b}
		y'_i(t)&=&a_i(t,w_iy_i(t))w_i^{-1}\bigg[-b_i(t,w_iy_i(t))+I_i(t)\nonumber\\
		& &+F_i\bigg(\dst\sum_{p=1}^{P}\sum_{j,l=1}^{n}c_{ijlp}(t)h_{ijlp}\big(w_jy_j(t-\tau_{ijp}(t)),w_ly_l(t-{\til{\tau}}_{ilp}(t)\big)\bigg)+G_i\bigg(\dst\sum_{q=1}^{Q}\sum_{j,l=1}^{n}d_{ijlq}(t)\nonumber\\
		& &\cdot f_{ijlq}\left(\int_{-\infty}^0g_{ijq}(w_jy_j(t+s))d\eta_{ijq}(s),\int_{-\infty}^0{\til{g}}_{ilq}(w_ly_l(t+s))d{\til{\eta}}_{ilq}(s)\right)\bigg)\bigg],
	\end{eqnarray}
	for $t\geq 0$, and $i=1,\ldots,n$.
	
    From \eqref{4.3}, there exists $\nu>0$ such that
    \begin{eqnarray}\label{4.4}
    	\un{a}_i\big(\beta_i(t)+D_i(t)\big)-\ov{a}_i\lefteqn{\sum_{j,l=1}^{n}\left[\dst\sum_{p=1}^{P}\zeta_i|c_{ijlp}(t)|\left(\frac{w_j}{w_i}\gamma^{(1)}_{ijlp}+\frac{w_l}{w_i}\gamma^{(2)}_{ijlp}\right)\right.}\nonumber\\
    	&+\dst\left.\sum_{q=1}^{Q}\varsigma_i|d_{ijlq}(t)|\left(\frac{w_j}{w_i}\mu^{(1)}_{ijlq}\xi_{ijq}+\frac{w_l}{w_i}\mu^{(2)}_{ijlq}\til{\xi}_{ilq}\right)\right](1+\nu)>\nu,
    \end{eqnarray}
    for all $t\geq 0$ and $i=1,\dots,n$.

As $\tau_{ijp}$ and $\til{\tau}_{ilp}$ are bounded functions, it is possible to define the non-negative real number
$$
\tau:=\max_{i,j,p}\left(\sup_{t\geq0}\left\{\tau_{ijp}(t),\til\tau_{ijp}(t)\right\}\right).
$$
 As in the proof of \cite[Theorem 4.3]{faria2011general}, from \eqref{4.2}, we can conclude that there exists $\alpha\in(0,\vartheta)$ such that
	\begin{eqnarray}\label{4.6}
		\int_{-\infty}^{0}\Ne^{-\alpha s}d\eta_{ijq}(s)<1+\nu\Es\text{ and }\Es \int_{-\infty}^{0}\Ne^{-\alpha s}d\til{\eta}_{ijq}(s)<1+\nu,
	\end{eqnarray}
	for all $i,j=1,\dots,n$ and $q=1,\dots,Q$.\\
	
	Let $\epsilon:=\min\{\nu,\alpha,\frac{\log(1+\nu)}{\tau+1}\}$ and consider the system \eqref{3.1b} in the phase space ${UC}^n_\epsilon$.
	
		Defining, for each $i=1,\ldots,n$, $\til{a}_i(t,u):=a_i(t,w_iu)$, $\til{b}_i(t,u)=w_i^{-1}b_i(t,w_iu)$, and
	\begin{align*}
		\til{f}_i(t,\phi)&:=w_i^{-1}F_i\bigg(\sum_{p=1}^{P}\sum_{j,l=1}^{n}c_{ijlp}(t)h_{ijlp}\big(w_j\phi_j(-\tau_{ijp}(t)),w_l\phi_l(-\til{\tau}_{ilp}(t))\big)\bigg)+w_i^{-1}I_i(t)\\
		&+w_i^{-1}G_i\Bigg(\sum_{q=1}^{Q}\sum_{j,l=1}^{n}d_{ijlq}(t)f_{ijlq}\bigg(\int_{-\infty}^{0}g_{ijq}(w_j\phi_j(s))d\eta_{ijq}(s),\int_{-\infty}^{0}\til{g}_{ilq}(w_l\phi_l(s))d\til{\eta}_{ilq}(s)\bigg)\Bigg)
	\end{align*}
	for all $u\in\er$ and $\phi=(\phi_1,\ldots,\phi_n)\in UC^n_\epsilon$, model \eqref{3.1b} has the form
	\begin{eqnarray}\label{1.1b}
			y'_i(t)=\til{a}_i(t,y_i(t))\big[-\til{b}_i(t,y_i(t))+\til{f}_i(t,y_t)\big],\Es t\geq 0, \, i=1,\dots,n.
    \end{eqnarray}
For model \eqref{1.1b}, the hypotheses (H1), (H2), and (H3) hold with same constants $\un{a}_i, \ov{a}_i$ and same functions $D_i(t),\beta_i(t)$.

From Theorem \ref{theorem 2.2}, the proof is concluded if hypotheses (H4) and (H5) hold.

	For $\phi=(\phi_1,\ldots,\phi_n),\psi=(\psi_1,\ldots,\psi_n)\in {UC}^n_\epsilon$, $t\geq0$, and $i=1,\ldots,n$, from (H4**) we have
	\begin{align*}
		|\til{f}_i(t,\phi)&-\til{f}_i(t,\psi)|\leq w_i^{-1}\sum_{j,l=1}^n\left[\zeta_i\sum_{p=1}^{P}|c_{ijlp}(t)|\right.\\
		&\cdot\left|h_{ijlp}\big(w_j\phi_j(-\tau_{ijp}(t)),w_l\phi_l(-\til{\tau}_{ilp}(t))\big)-h_{ijlp}\big(w_j\psi_j(-\tau_{ijp}(t)),w_l\psi_l(-\til{\tau}_{ilp}(t))\big)\right|\\
		&+\varsigma_{i}\sum_{q=1}^{Q}|d_{ijlq}(t)|\left|f_{ijlq}\bigg(\int_{-\infty}^{0}g_{ijq}(w_j\phi_j(s))d\eta_{ijq}(s),\int_{-\infty}^{0}\til{g}_{ilq}(w_l\phi_l(s))d\til{\eta}_{ilq}(s)\bigg)\right.\\
		&-\left.\left.f_{ijlq}\bigg(\int_{-\infty}^{0}g_{ijq}(w_j\psi_j(s))d\eta_{ijq}(s),\int_{-\infty}^{0}\til{g}_{ilq}(w_l\psi_l(s))d\til{\eta}_{ilq}(s)\bigg)\right|\right]\\
		\leq& w_i^{-1}\sum_{j,l=1}^n\left[\zeta_i\sum_{p=1}^{P}|c_{ijlp}(t)|\right.\\
		&\cdot\bigg(\gamma_{ijlp}^{(1)}w_j\big|\phi_j(-\tau_{ijp}(t))-\psi_j(-\tau_{ijp}(t))\big|+\gamma_{ijlp}^{(2)}w_l\big|\phi_l(-\til{\tau}_{ilp}(t))-\psi_l(-\til{\tau}_{ilp}(t))\big|\bigg)\\
		&+\varsigma_{i}\sum_{q=1}^{Q}|d_{ijlq}(t)|\bigg(\mu_{ijlq}^{(1)}\left|\int_{-\infty}^{0}g_{ijq}(w_j\phi_j(s))-g_{ijq}(w_j\psi_j(s))d\eta_{ijq}(s)\right|\\
		&+\mu_{ijlq}^{(2)}\left|\int_{-\infty}^{0}\til{g}_{ilq}(w_l\phi_l(s))-\til{g}_{ilq}(w_l\psi_l(s))d\til{\eta}_{ilq}(s)\right|\bigg)\bigg].
		\end{align*}
	Again from (H4**) and by the monotony of $\eta_{ijq}$ and $\til{\eta}_{ijq}$ we obtain,
		\begin{align}\label{eq:prova-parcial}
		|\til{f}_i(t,\phi)-\til{f}_i(t,\psi)|\leq& \sum_{j,l=1}^n\left[\zeta_i\sum_{p=1}^{P}|c_{ijlp}(t)|\bigg(\gamma_{ijlp}^{(1)}\frac{w_j}{w_i}\big|\phi_j(-\tau_{ijp}(t))-\psi_j(-\tau_{ijp}(t))\big|\right.\nonumber\\
		 &+\gamma_{ijlp}^{(2)}\frac{w_l}{w_i}\big|\phi_l(-\til{\tau}_{ilp}(t))-\psi_l(-\til{\tau}_{ilp}(t))\big|\bigg)\nonumber\\
		 &+\varsigma_{i}\sum_{q=1}^{Q}|d_{ijlq}(t)|\bigg(\mu_{ijlq}^{(1)}\int_{-\infty}^{0}\xi_{ijq}\frac{w_j}{w_i}|\phi_j(s)-\psi_j(s)|d\eta_{ijq}(s)\nonumber\\
		 &+\left.\mu_{ijlq}^{(2)}\int_{-\infty}^{0}\til{\xi}_{ilq}\frac{w_l}{w_i}|\phi_l(s)-\psi_l(s)|d\til{\eta}_{ilq}(s)\bigg)\right]
	\end{align}
	and consequently
	\begin{align*}
		|\til{f}_i(t,\phi)-\til{f}_i(t,\psi)|\leq&
		\sum_{j,l=1}^n\left[\zeta_i\sum_{p=1}^{P}|c_{ijlp}(t)|\bigg(\gamma_{ijlp}^{(1)}\frac{w_j}{w_i}\frac{\big|(\phi_j-\psi_j)(-\tau_{ijp}(t))\big|}{\Ne^{-\epsilon\left(-\tau_{ijp}(t)\right)}}\Ne^{\epsilon\tau_{ijp}(t)}\right.\\
		&+\left.\gamma_{ijlp}^{(2)}\frac{w_l}{w_i}\frac{\big|(\phi_l-\psi_l)(-\til{\tau}_{ilp}(t))\big|}{\Ne^{-\epsilon\left(-\til{\tau}_{ilp}(t)\right)}}\Ne^{\epsilon\til{\tau}_{ilp}(t)}\right)\\
		&+\varsigma_{i}\sum_{q=1}^{Q}|d_{ijlq}(t)|\bigg(\mu_{ijlq}^{(1)}\int_{-\infty}^{0}\xi_{ijq}\frac{w_j}{w_i}\frac{|(\phi_j-\psi_j)(s)|}{\Ne^{-\epsilon s}}\Ne^{-\epsilon s}d\eta_{ijq}(s)\\
		&+\left.\left.\mu_{ijlq}^{(2)}\int_{-\infty}^{0}\til{\xi}_{ilq}\frac{w_l}{w_i}\frac{|(\phi_l-\psi_l)(s)|}{\Ne^{-\epsilon s}}\Ne^{-\epsilon s}d\til{\eta}_{ilq}(s)\right)\right]\\
		\leq&\sum_{j,l=1}^n\left[\zeta_i\sum_{p=1}^{P}|c_{ijlp}(t)|\left(\gamma_{ijlp}^{(1)}\frac{w_j}{w_i}\|\phi-\psi\|_\epsilon\Ne^{\epsilon\tau_{ijp}(t)}+\gamma_{ijlp}^{(2)}\frac{w_l}{w_i}\|\phi-\psi\|_\epsilon\Ne^{\epsilon\til{\tau}_{ilp}(t)}\right)\right.\\
		&+\varsigma_{i}\sum_{q=1}^{Q}|d_{ijlq}(t)|\bigg(\mu_{ijlq}^{(1)}\int_{-\infty}^{0}\xi_{ijq}\frac{w_j}{w_i}\|\phi-\psi\|_\epsilon\Ne^{-\epsilon s}d\eta_{ijq}(s)\\
		&+\left.\mu_{ijlq}^{(2)}\int_{-\infty}^{0}\til{\xi}_{ilq}\frac{w_l}{w_i}\|\phi-\psi\|_\epsilon \Ne^{-\epsilon s}d\til{\eta}_{ilq}(s)\bigg)\right]\\
		\leq&\|\phi-\psi\|_\epsilon\sum_{j,l=1}^n\left[\zeta_i\sum_{p=1}^{P}|c_{ijlp}(t)|\left(\gamma_{ijlp}^{(1)}\frac{w_j}{w_i}+\gamma_{ijlp}^{(2)}\frac{w_l}{w_i}\right)\Ne^{\epsilon\tau}\right.\\
		&+\left.\varsigma_{i}\sum_{q=1}^{Q}|d_{ijlq}(t)|\bigg(\mu_{ijlq}^{(1)}\xi_{ijq}\frac{w_j}{w_i}\int_{-\infty}^{0}\Ne^{-\epsilon s}d\eta_{ijq}(s)+\mu_{ijlq}^{(2)}\til{\xi}_{ilq}\frac{w_l}{w_i}\int_{-\infty}^{0} \Ne^{-\epsilon s}d\til{\eta}_{ilq}(s)\bigg)\right].
	\end{align*}
     As $\epsilon\leq\alpha$ from \eqref{4.6} we have
     $$
     \int_{-\infty}^{0}\Ne^{-\epsilon s}d\eta_{ijq}(s)<1+\nu\Es\text{ and }\Es\int_{-\infty}^{0}\Ne^{-\epsilon s}d\til{\eta}_{ijq}(s)<1+\nu,
     $$
     for all $i,j=1,\ldots,n$. As $\epsilon\leq\dst\frac{\log(1+\nu)}{\tau+1}$, then we also have
	$$
	  \Ne^{\epsilon\tau}<\Ne^{\epsilon(\tau+1)}\leq1+\nu.
	$$
	Consequently
	\begin{align*}
		|\til{f}_i(t,\phi)-\til{f}_i(t,\psi)|\leq&\left[\sum_{j,l=1}^n\left(\zeta_i\sum_{p=1}^{P}|c_{ijlp}(t)|\left(\gamma_{ijlp}^{(1)}\frac{w_j}{w_i}+\gamma_{ijlp}^{(2)}\frac{w_l}{w_i}\right)\right.\right.\\
		&+\left.\left.\varsigma_{i}\sum_{q=1}^{Q}|d_{ijlq}(t)|\bigg(\mu_{ijlq}^{(1)}\xi_{ijq}\frac{w_j}{w_i}+\mu_{ijlq}^{(2)}\til{\xi}_{ilq}\frac{w_l}{w_i}\bigg)\right)(1+\nu)\right]\|\phi-\psi\|_\epsilon,
	\end{align*}
 and hypothesis (H4) holds with
 $$
   \mathcal{L}_i(t)=\sum_{j,l=1}^n\left(\zeta_i\sum_{p=1}^{P}|c_{ijlp}(t)|\left(\gamma_{ijlp}^{(1)}\frac{w_j}{w_i}+\gamma_{ijlp}^{(2)}\frac{w_l}{w_i}\right)
   +\varsigma_{i}\sum_{q=1}^{Q}|d_{ijlq}(t)|\bigg(\mu_{ijlq}^{(1)}\xi_{ijq}\frac{w_j}{w_i}+\mu_{ijlq}^{(2)}\til{\xi}_{ilq}\frac{w_l}{w_i}\bigg)\right)(1+\nu)
 $$
 for all $i=1,\ldots,n$.

 As $\epsilon\leq\nu$ and from \eqref{4.4}, the hypothesis (H5) also holds and the proof is concluded.
\end{proof}

Now, we assume that the model \eqref{3.1} is periodic, i.e. there is $\omega>0$ such that
\begin{description}
	\item[{(H2**)}] There is $\omega>0$ such that, for each $i,j,l=1,\ldots,n$, $p=1,\ldots,P$, and \mbox{$q=1,\ldots,Q$},
	{$$
	\begin{array}{llll}
	a_i(t,u)=a_i(t+\omega,u), & c_{ijlp}(t)=c_{ijlp}(t+\omega), & \tau_{ijp}(t)=\tau_{ijp}(t+\omega),\\
	b_i(t,u)=b_i(t+\omega,u), & d_{ijlq}(t)=d_{ijlq}(t+\omega), & \til{\tau}_{ijp}(t)=\til{\tau}_{ijp}(t+\omega), \text{ and }\\
    I_i(t)=I_i(t+\omega) & & &
	\end{array}
	$$}
	for all $t\geq0$ and $u\in\er$.
\end{description}

\begin{teorema}\label{teo:existencia-sol-perio-model-part}
	Assume the hypotheses {(H2**)}, (H1), (H3*), and (H4**).
	
	If there exists $w=(w_1,\ldots,w_n)>0$ such that for all $t\in[0,\omega]$, and $i=1,\ldots,n$,	
	\begin{eqnarray}\label{4.3b}
		\beta_i(t)>\sum_{j,l=1}^{n}\lefteqn{\left[\dst\sum_{p=1}^{P}\zeta_i|c_{ijlp}(t)|\left(\frac{w_j}{w_i}\gamma^{(1)}_{ijlp}+\frac{w_l}{w_i}\gamma^{(2)}_{ijlp}\right)\right.}\nonumber\\
		&+\dst\left.\sum_{q=1}^{Q}\varsigma_i|d_{ijlq}(t)|\left(\frac{w_j}{w_i}\mu^{(1)}_{ijlq}\xi_{ijq}+\frac{w_l}{w_i}\mu^{(2)}_{ijlq}\til{\xi}_{ilq}\right)\right],
	\end{eqnarray}
	then the model \eqref{3.1} has an $\omega-$periodic solution.	
\end{teorema}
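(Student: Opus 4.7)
The plan is to reduce the claim to the abstract periodic-existence theorem (Theorem \ref{teo:existencia-orbita-periodica}) by means of the same rescaling used in the proof of Theorem \ref{th 3.1}. Setting $y_i(t)=w_i^{-1}x_i(t)$ transforms model \eqref{3.1} into a system of the form \eqref{1.1} with coefficients $\til{a}_i(t,u)=a_i(t,w_iu)$, $\til{b}_i(t,u)=w_i^{-1}b_i(t,w_iu)$, and a nonlinearity $\til{f}_i(t,\phi)$ given by the expression appearing in the proof of Theorem \ref{th 3.1}. Since the rescaling is a linear diffeomorphism of $\er^n$ that sends $\omega$-periodic orbits bijectively to $\omega$-periodic orbits, it suffices to verify hypotheses (H1*), (H2*), (H3*), (H4*), and (H5*) for the transformed system and then invoke Theorem \ref{teo:existencia-orbita-periodica}.

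Hypotheses (H1*) and (H2*) are immediate: assumption (H1) provides the same bounds $\un{a}_i,\ov{a}_i$ for $\til{a}_i$, and (H2**) transfers $\omega$-periodicity to every rescaled coefficient. Hypothesis (H3*) follows from the identity
$$
\frac{\til{b}_i(t,u)-\til{b}_i(t,v)}{u-v}=\frac{b_i(t,w_iu)-b_i(t,w_iv)}{w_iu-w_iv},
$$
so that the same $\omega$-periodic functions $\beta_i,\beta_i^*$ that bound the difference quotient of $b_i$ also bound that of $\til{b}_i$.

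The main step is to verify (H4*) for $\til{f}_i$ with respect to the sup norm on $BC$. Here the argument is a simpler version of the estimate carried out in the proof of Theorem \ref{th 3.1}: the phase-space parameter $\epsilon$ and the exponential weights $\Ne^{\epsilon\tau}$ and $\Ne^{-\epsilon s}$ are not needed, so the moment condition \eqref{4.2} is not required either. Splitting $\til{f}_i$ into its discrete and distributed contributions and applying (H4**) together with the normalization $\eta_{ijq}(0)-\eta_{ijq}(-\infty)=\til{\eta}_{ilq}(0)-\til{\eta}_{ilq}(-\infty)=1$ yields, for all $\phi,\psi\in BC$ and $t\ge 0$,
$$
|\til{f}_i(t,\phi)-\til{f}_i(t,\psi)|\le \mathcal{L}_i(t)\,\|\phi-\psi\|,
$$
with
$$
\mathcal{L}_i(t)=\sum_{j,l=1}^n\left[\zeta_i\sum_{p=1}^P|c_{ijlp}(t)|\left(\tfrac{w_j}{w_i}\gamma^{(1)}_{ijlp}+\tfrac{w_l}{w_i}\gamma^{(2)}_{ijlp}\right)+\varsigma_i\sum_{q=1}^Q|d_{ijlq}(t)|\left(\tfrac{w_j}{w_i}\mu^{(1)}_{ijlq}\xi_{ijq}+\tfrac{w_l}{w_i}\mu^{(2)}_{ijlq}\til{\xi}_{ilq}\right)\right].
$$
By (H2**), $\mathcal{L}_i$ is continuous and $\omega$-periodic, so (H4*) holds, and the assumption \eqref{4.3b} reads precisely as $\beta_i(t)>\mathcal{L}_i(t)$ for $t\in[0,\omega]$, which is (H5*).

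With all five hypotheses verified, Theorem \ref{teo:existencia-orbita-periodica} produces an $\omega$-periodic solution of the rescaled system, and undoing the rescaling returns an $\omega$-periodic solution of \eqref{3.1}. The only genuinely non-routine point is the Lipschitz bookkeeping in the computation of $\mathcal{L}_i$, and that is what I expect to be the main obstacle; however, since the analogous estimate has already been performed in the stronger, weighted-norm form in the proof of Theorem \ref{th 3.1}, essentially dropping the factors $\Ne^{\epsilon\tau}$ and $\int_{-\infty}^0\Ne^{-\epsilon s}\,d\eta_{ijq}(s)$ from that calculation immediately delivers the sup-norm estimate required here.
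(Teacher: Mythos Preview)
Your proposal is correct and follows essentially the same route as the paper: rescale by $y_i=w_i^{-1}x_i$, verify (H1*)--(H5*) for the transformed system, obtain the Lipschitz constant $\mathcal{L}_i(t)$ exactly as you wrote by repeating the estimate from the proof of Theorem~\ref{th 3.1} in the sup norm (using only the normalization of $\eta_{ijq},\til{\eta}_{ilq}$ and not \eqref{4.2}), and then apply Theorem~\ref{teo:existencia-orbita-periodica}. The paper's own proof is precisely this argument, referring back to \eqref{eq:prova-parcial} and then passing to the $BC$ norm.
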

\begin{proof}
 As in previous proof, the change of variables $y_i(t)=w_i^{-1}x_i(t)$ transforms model \eqref{3.1} into \eqref{3.1b}.

 Considering model \eqref{3.1b} in the phase space $UC_\epsilon^n$, for some $\epsilon>0$, it has the form \eqref{1.1b}.

 Proceeding as in the previous proof, functions $\til{f}_i$ verify \eqref{eq:prova-parcial} for all $\phi=(\phi_1,\ldots,\phi_n),\psi=(\psi_1,\ldots,\psi_n)\in BC$ and $t\geq0$. Consequently
 \begin{align*}
 	|\til{f}_i(t,\phi)-\til{f}_i(t,\psi)|\leq&\|\phi-\psi\| \sum_{j,l=1}^n\left[\zeta_i\sum_{p=1}^{P}|c_{ijlp}(t)|\bigg(\gamma_{ijlp}^{(1)}\frac{w_j}{w_i}+\gamma_{ijlp}^{(2)}\frac{w_l}{w_i}\bigg)\right.\\
 	&+\left.\varsigma_{i}\sum_{q=1}^{Q}|d_{ijlq}(t)|\bigg(\mu_{ijlq}^{(1)}\xi_{ijq}\frac{w_j}{w_i}\int_{-\infty}^{0}d\eta_{ijq}(s)+\mu_{ijlq}^{(2)}\til{\xi}_{ilq}\frac{w_l}{w_i}\int_{-\infty}^{0}d\til{\eta}_{ilq}(s)\bigg)\right],\\
 \end{align*}
 and from the properties of $\eta_{ijq}$ and $\til{\eta}_{ijq}$ we obtain
 $$
   |\til{f}_i(t,\phi)-\til{f}_i(t,\psi)|\leq\mathcal{L}_i(t)\|\phi-\psi\|,
 $$
 with
 $$
   \mathcal{L}_i(t)=\sum_{j,l=1}^n\left[\zeta_i\sum_{p=1}^{P}|c_{ijlp}(t)|\bigg(\gamma_{ijlp}^{(1)}\frac{w_j}{w_i}+\gamma_{ijlp}^{(2)}\frac{w_l}{w_i}\bigg)+\varsigma_{i}\sum_{q=1}^{Q}|d_{ijlq}(t)|\bigg(\mu_{ijlq}^{(1)}\xi_{ijq}\frac{w_j}{w_i}+\mu_{ijlq}^{(2)}\til{\xi}_{ilq}\frac{w_l}{w_i}\bigg)\right],
 $$
 thus (H4*) holds for model \eqref{1.1b}.

 By hypothesis \eqref{4.3b}, (H5*) also holds and the conclusion follows from Theorem \ref{teo:existencia-orbita-periodica}.
\end{proof}

Immediately from Theorems  \ref{th 3.1} and \ref{teo:existencia-sol-perio-model-part}, we have the following result.
\begin{corolario}\label{cor:existencia+estabilidade-modelo-CG}
	Assume {(H1)}, (H2) with $D_i$ an $\omega-$periodic continuous function, {(H2**)}, (H3*), (H4**), and \eqref{4.2}.
	
	If there exists $w=(w_1,\ldots,w_n)>0$ such that, for all $t\in[0,\omega]$ and $i=1,\ldots,n$ inequality \eqref{4.3b} holds and
	\begin{eqnarray}\label{4.3c}
		\un{a}_i\big(\beta_i(t)+D_i(t)\big)>\ov{a}_i\sum_{j,l=1}^{n}\lefteqn{\left[\dst\sum_{p=1}^{P}\zeta_i|c_{ijlp}(t)|\left(\frac{w_j}{w_i}\gamma^{(1)}_{ijlp}+\frac{w_l}{w_i}\gamma^{(2)}_{ijlp}\right)\right.}\nonumber\\
		&+\dst\left.\sum_{q=1}^{Q}\varsigma_i|d_{ijlq}(t)|\left(\frac{w_j}{w_i}\mu^{(1)}_{ijlq}\xi_{ijq}+\frac{w_l}{w_i}\mu^{(2)}_{ijlq}\til{\xi}_{ilq}\right)\right],
	\end{eqnarray}
then the model \eqref{3.1} has an $\omega-$periodic solution which is globally exponentially stable.
\end{corolario}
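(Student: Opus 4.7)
The plan is to combine Theorem~\ref{th 3.1} and Theorem~\ref{teo:existencia-sol-perio-model-part}, essentially as the statement of the corollary suggests. The hypotheses are organized so that one part yields the existence of an $\omega$-periodic solution and the other yields global exponential stability of the system, which together imply that the periodic solution attracts every other solution exponentially.

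First I would apply Theorem~\ref{teo:existencia-sol-perio-model-part}: the assumptions (H2**), (H1), (H3*), (H4**), together with inequality~\eqref{4.3b}, are exactly what that theorem requires. This provides at least one $\omega$-periodic solution $x^*(t)$ of \eqref{3.1}. Next I would apply Theorem~\ref{th 3.1} to conclude the global exponential stability of \eqref{3.1}. For this, I would verify that all hypotheses of Theorem~\ref{th 3.1} are in force: (H1) is assumed directly; (H2) (with $D_i$ continuous $\omega$-periodic) is assumed; (H3) follows from (H3*), since the lower bound on the difference quotient of $b_i$ is the same in both hypotheses; (H4**) is assumed; the boundedness of the delays $\tau_{ijp}, \til\tau_{ilp}$ follows from (H2**) together with their continuity on the compact interval $[0,\omega]$; and condition~\eqref{4.2} is assumed.

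The only step that requires a small extra argument is producing the $\varepsilon>0$ required in~\eqref{4.3}. Here~\eqref{4.3c} gives a \emph{strict} inequality on $[0,\omega]$ between two continuous functions, which is equivalent (by $\omega$-periodicity of both sides, ensured by (H2**), the $\omega$-periodicity of $D_i$, and (H3*)) to the same strict inequality for all $t\geq 0$. The difference
\[
\un{a}_i\big(\beta_i(t)+D_i(t)\big)-\ov{a}_i\sum_{j,l=1}^{n}\!\left[\sum_{p=1}^{P}\zeta_i|c_{ijlp}(t)|\!\left(\tfrac{w_j}{w_i}\gamma^{(1)}_{ijlp}+\tfrac{w_l}{w_i}\gamma^{(2)}_{ijlp}\right)+\sum_{q=1}^{Q}\varsigma_i|d_{ijlq}(t)|\!\left(\tfrac{w_j}{w_i}\mu^{(1)}_{ijlq}\xi_{ijq}+\tfrac{w_l}{w_i}\mu^{(2)}_{ijlq}\til{\xi}_{ilq}\right)\right]
\]
is a continuous $\omega$-periodic function of $t$, strictly positive on $[0,\omega]$, and hence attains a positive minimum $m_i>0$. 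Taking $\varepsilon:=\tfrac{1}{2}\min_i m_i>0$, condition~\eqref{4.3} of Theorem~\ref{th 3.1} holds uniformly in $t\geq0$ and $i=1,\ldots,n$.

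With all hypotheses of Theorem~\ref{th 3.1} verified, that theorem yields the global exponential stability of \eqref{3.1} in the sense of Definition~\ref{defn 2.1}. Applying this stability estimate to the pair (arbitrary solution, periodic solution $x^*$) shows that every solution converges to $x^*$ exponentially fast, finishing the proof. I do not anticipate a real obstacle: the entire argument is a matching of hypotheses, the only subtlety being the passage from the strict pointwise inequality~\eqref{4.3c} on a compact period interval to the uniform lower bound~$\varepsilon$ required by Theorem~\ref{th 3.1}, which is handled by the compactness argument above.
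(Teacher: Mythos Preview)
Your proposal is correct and matches the paper's own proof essentially line for line: the paper likewise deduces boundedness of $\tau_{ijp},\til\tau_{ijp}$ from (H2**), uses the $\omega$-periodicity of $\beta_i$, $D_i$, $c_{ijlp}$, $d_{ijlq}$ to extract the uniform $\varepsilon>0$ from the strict inequality \eqref{4.3c}, and then invokes Theorems~\ref{th 3.1} and~\ref{teo:existencia-sol-perio-model-part}. Your write-up is in fact more explicit than the paper's about the compactness argument producing $\varepsilon$.
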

\begin{proof}
	From {(H2**)} functions $\tau_{ijp}$, $\til{\tau}_{ijp}$ are bounded.
	
	Moreover, from {(H2**)} and (H3*) we know that $\beta_i$, $c_{ijlp}$, and $d_{ijlq}$ are $\omega-$periodic functions. As $D_i$ are also $\omega-$periodic, then there is $\varepsilon>0$ such that inequality \eqref{4.3} holds and the conclusion comes from  Theorems  \ref{th 3.1} and \ref{teo:existencia-sol-perio-model-part}.
\end{proof}

Now, we consider model \eqref{3.1} with amplifications functions, $a_i$, do not explicitly depend on time $t$, i.e.
 \begin{eqnarray} \label{3.1c}
	x'_i(t)&=&a_i(x_i(t))\bigg[-b_i(t,x_i(t))+F_i\bigg(\dst\sum_{p=1}^{P}\sum_{j,l=1}^{n}c_{ijlp}(t)h_{ijlp}\big(x_j(t-\tau_{ijp}(t)),x_l(t-{\til{\tau}}_{ilp}(t)\big)\bigg)\nonumber\\
	& &+G_i\bigg(\dst\sum_{q=1}^{Q}\sum_{j,l=1}^{n}d_{ijlq}(t)f_{ijlq}\left(\int_{-\infty}^0g_{ijq}(x_j(t+s))d\eta_{ijq}(s),\int_{-\infty}^0{\til{g}}_{ilq}(x_l(t+s))d{\til{\eta}}_{ilq}(s)\right)\bigg)\nonumber\\
	& &+I_i(t)\bigg],\Es t\geq 0,\Es \,i=1,\ldots,n,
\end{eqnarray}

From Corollary \ref{cor:existencia+estabilidade-modelo-CG} we have the following result.
\begin{corolario}\label{cor:existencia+estabilidade-modelo-ampl-auto}
		Assume (H1), {(H2**)}, (H3*), (H4**), and \eqref{4.2}.
	
	If there exists $w=(w_1,\ldots,w_n)>0$ such that, for all $t\in[0,\omega]$ and $i=1,\ldots,n$,
	\begin{eqnarray}\label{4.3d}
		\un{a}_i\beta_i(t)>\ov{a}_i\sum_{j,l=1}^{n}\lefteqn{\left[\dst\sum_{p=1}^{P}\zeta_i|c_{ijlp}(t)|\left(\frac{w_j}{w_i}\gamma^{(1)}_{ijlp}+\frac{w_l}{w_i}\gamma^{(2)}_{ijlp}\right)\right.}\nonumber\\
		&+\dst\left.\sum_{q=1}^{Q}\varsigma_i|d_{ijlq}(t)|\left(\frac{w_j}{w_i}\mu^{(1)}_{ijlq}\xi_{ijq}+\frac{w_l}{w_i}\mu^{(2)}_{ijlq}\til{\xi}_{ilq}\right)\right],
	\end{eqnarray}
	then the model \eqref{3.1c} has an $\omega-$periodic solution which is globally exponentially stable.
\end{corolario}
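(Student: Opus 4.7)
The strategy is to reduce this corollary to Corollary \ref{cor:existencia+estabilidade-modelo-CG} by exploiting the fact that $a_i$ does not depend explicitly on $t$. Since $\partial a_i/\partial t\equiv 0$, hypothesis (H2) holds with the choice $D_i(t):=0$ for all $t\geq0$ and $i=1,\ldots,n$; this function is trivially $\omega$-periodic and continuous, so the assumption on $D_i$ required by Corollary \ref{cor:existencia+estabilidade-modelo-CG} is satisfied. Together with the hypotheses already present, (H1), (H2**), (H3*), (H4**) and \eqref{4.2}, the only thing left to verify is that the two inequalities \eqref{4.3b} and \eqref{4.3c} both follow from the single hypothesis \eqref{4.3d}.

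For \eqref{4.3c}: with $D_i\equiv 0$, inequality \eqref{4.3c} is literally identical to \eqref{4.3d}, so there is nothing to do. For \eqref{4.3b}: from (H1) we have $\ov{a}_i\geq\un{a}_i>0$, hence $\ov{a}_i/\un{a}_i\geq1$; dividing \eqref{4.3d} by $\un{a}_i$ yields
$$
\beta_i(t)>\frac{\ov{a}_i}{\un{a}_i}\sum_{j,l=1}^{n}\left[\sum_{p=1}^{P}\zeta_i|c_{ijlp}(t)|\left(\tfrac{w_j}{w_i}\gamma^{(1)}_{ijlp}+\tfrac{w_l}{w_i}\gamma^{(2)}_{ijlp}\right)+\sum_{q=1}^{Q}\varsigma_i|d_{ijlq}(t)|\left(\tfrac{w_j}{w_i}\mu^{(1)}_{ijlq}\xi_{ijq}+\tfrac{w_l}{w_i}\mu^{(2)}_{ijlq}\til{\xi}_{ilq}\right)\right],
$$
which, since $\ov{a}_i/\un{a}_i\geq1$ and every term in the bracket is non-negative, is at least as strong as \eqref{4.3b}. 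Thus \eqref{4.3b} holds as well.

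All the hypotheses of Corollary \ref{cor:existencia+estabilidade-modelo-CG} being satisfied (with $D_i\equiv0$), its conclusion gives at once that model \eqref{3.1c} admits an $\omega$-periodic solution which is globally exponentially stable. There is no genuine obstacle here; the content of the proof is the bookkeeping observation that the time-independence of $a_i$ lets us absorb hypothesis (H2) trivially and that inequality \eqref{4.3d} is strong enough to imply both \eqref{4.3b} and \eqref{4.3c} simultaneously.
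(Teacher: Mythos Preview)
Your proof is correct and follows essentially the same approach as the paper: observe that (H2) holds with $D_i\equiv0$ since $a_i$ is time-independent, note that \eqref{4.3d} then coincides with \eqref{4.3c} and implies \eqref{4.3b}, and apply Corollary~\ref{cor:existencia+estabilidade-modelo-CG}. You are simply more explicit than the paper in spelling out that \eqref{4.3d} dominates \eqref{4.3b} via $\ov{a}_i/\un{a}_i\geq1$.
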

\begin{proof}
	Noting that $a_i(t,u)=a_i(u)$ for all $t,u\in\er$ and $i=1,\ldots,n$, the hypothesis (H2) trivially holds with $D_i(t)=0$. Consequently inequality \eqref{4.3d} implies \eqref{4.3b} and the result comes from Corollary \ref{cor:existencia+estabilidade-modelo-CG}.
\end{proof}

For model \eqref{3.1c} under the hypotheses {(H2**)}, (H1), (H3*), and (H4**), consider the constants
\begin{eqnarray}\label{eq:definicao-constantes}
  \un{\beta}_i:=\min_{t\in[0,\omega]}\beta_i(t),\Es\ov{c}_{ijlp}:=\max_{t\in[0,\omega]}c_{ijlp}(t),\Es\text{and}\Es \ov{d}_{ijlq}:=\max_{t\in[0,\omega]}d_{ijlq}(t),
\end{eqnarray}
for each $i,j=1,\ldots,n$, $p=1,\ldots,P$, $q=1,\ldots,Q$, and the
square real matrix $\mathcal{M}$ defined by
\begin{eqnarray}\label{def-M-matrix}
	\mathcal{M}:=diag\big(\un{a}_1\un{\beta}_1,\ldots,\un{a}_n\un{\beta}_n\big)-\big[\mathfrak{m}_{ij}\big]_{i,j=1}^n,
\end{eqnarray}
where, for each $i,j=1,\ldots,n$,
$$
\mathfrak{m}_{ij}:=\ov{a}_i\sum_{l=1}^n\left(\zeta_i\sum_{p=1}^P\left(\ov{c}_{ijlp}\gamma_{ijlp}^{(1)}+\ov{c}_{iljp}\gamma_{iljp}^{(2)}\right)+\varsigma_i\sum_{q=1}^Q\left(\ov{d}_{ijlq}\mu_{ijlq}^{(1)}\xi_{ijq}+\ov{d}_{iljq}\mu_{iljq}^{(2)}\til{\xi}_{ijq}\right)\right).
$$

\begin{corolario}\label{cor:crit-M-matrix}
	Assume (H1), {(H2**)}, (H3*), (H4**), and \eqref{4.2}.
	
	If $\mathcal{M}$ is a non-singular M-matrix, then the model \eqref{3.1c} has an $\omega-$periodic solution which is globally exponentially stable.
\end{corolario}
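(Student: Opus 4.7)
The plan is to reduce the corollary to Corollary \ref{cor:existencia+estabilidade-modelo-ampl-auto} by producing, from the M-matrix hypothesis, a positive vector $w=(w_1,\ldots,w_n)$ for which the pointwise inequality \eqref{4.3d} is valid for every $t\in[0,\omega]$ and every $i=1,\ldots,n$.

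The starting point is the standard characterization of non-singular M-matrices: a matrix of the form $\mathcal{M}=sI-B$ with $B\geq0$ entrywise is a non-singular M-matrix if and only if there exists $w=(w_1,\ldots,w_n)>0$ such that $\mathcal{M}w>0$ componentwise. I would invoke this to obtain such a $w$, which translates into
$$
\un{a}_i\un{\beta}_iw_i-\sum_{j=1}^n\mathfrak{m}_{ij}w_j>0,\qquad i=1,\ldots,n,
$$
or equivalently, after dividing by $w_i>0$,
$$
\un{a}_i\un{\beta}_i>\frac{1}{w_i}\sum_{j=1}^n\mathfrak{m}_{ij}w_j,\qquad i=1,\ldots,n.
$$

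Next, I would unpack $\mathfrak{m}_{ij}$ using its definition and rearrange the double sum over $j,l$ by renaming indices in the terms carrying $\gamma^{(2)}_{ijlp}$, $\mu^{(2)}_{ijlq}$: swapping $j\leftrightarrow l$ in those contributions turns $\ov c_{iljp}\gamma^{(2)}_{iljp}w_j$ back into $\ov c_{ijlp}\gamma^{(2)}_{ijlp}w_l$, and similarly for the distributed-delay part. After this bookkeeping one obtains
$$
\sum_{j=1}^n\mathfrak{m}_{ij}w_j=\ov a_i\sum_{j,l=1}^n\Bigl[\sum_{p=1}^P\zeta_i\ov c_{ijlp}\bigl(\gamma^{(1)}_{ijlp}w_j+\gamma^{(2)}_{ijlp}w_l\bigr)+\sum_{q=1}^Q\varsigma_i\ov d_{ijlq}\bigl(\mu^{(1)}_{ijlq}\xi_{ijq}w_j+\mu^{(2)}_{ijlq}\til\xi_{ilq}w_l\bigr)\Bigr],
$$
which is exactly the right-hand side of \eqref{4.3d}, evaluated at the worst-case coefficients, multiplied by $w_i$.

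Using $\beta_i(t)\geq\un{\beta}_i$, $|c_{ijlp}(t)|\leq\ov c_{ijlp}$ and $|d_{ijlq}(t)|\leq\ov d_{ijlq}$ for every $t\in[0,\omega]$, I then chain the inequalities to conclude that
$$
\un a_i\beta_i(t)\geq\un a_i\un\beta_i>\frac{1}{w_i}\sum_{j=1}^n\mathfrak{m}_{ij}w_j\geq\ov a_i\sum_{j,l=1}^n\Bigl[\sum_{p=1}^P\zeta_i|c_{ijlp}(t)|\bigl(\tfrac{w_j}{w_i}\gamma^{(1)}_{ijlp}+\tfrac{w_l}{w_i}\gamma^{(2)}_{ijlp}\bigr)+\cdots\Bigr]
$$
for every $t\in[0,\omega]$ and $i$, which is precisely \eqref{4.3d}. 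Since hypotheses (H1), (H2**), (H3*), (H4**), and \eqref{4.2} are in force, Corollary \ref{cor:existencia+estabilidade-modelo-ampl-auto} applies and yields the existence of an $\omega$-periodic solution of \eqref{3.1c} that is globally exponentially stable.

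The only delicate step is the index reshuffling needed to match $\sum_j\mathfrak{m}_{ij}w_j$ with the precise shape of the right-hand side of \eqref{4.3d}; everything else is a direct appeal to the M-matrix characterization and to the uniform bounds provided by the $\omega$-periodicity of the coefficients in (H2**) and (H3*).
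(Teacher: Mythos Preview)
Your proposal is correct and follows essentially the same route as the paper: invoke the characterization of non-singular M-matrices to obtain $w>0$ with $\mathcal{M}w>0$, expand $\sum_j\mathfrak{m}_{ij}w_j$, perform the index swap $j\leftrightarrow l$ in the $\gamma^{(2)}$ and $\mu^{(2)}$ terms, then use the bounds $\beta_i(t)\ge\un\beta_i$, $|c_{ijlp}(t)|\le\ov c_{ijlp}$, $|d_{ijlq}(t)|\le\ov d_{ijlq}$ to recover \eqref{4.3d} and apply Corollary~\ref{cor:existencia+estabilidade-modelo-ampl-auto}. The paper's proof is organized identically, and your identification of the index reshuffling as the only nontrivial bookkeeping step is exactly right.
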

\begin{proof}
	As $\mathcal{M}$ is a non-singular M-matrix, then (see \cite{fiedler2008special}) there exists $w=(w_1,\ldots,w_n)>0$ such that $\mathcal{M}w^T>0$, i.e.,
	$$
	  \un{a}_i\un{\beta}_iw_i>\sum_{j=1}^nw_j\left[\ov{a}_i\sum_{l=1}^n\left(\zeta_i\sum_{p=1}^P\left(\ov{c}_{ijlp}\gamma_{ijlp}^{(1)}+\ov{c}_{iljp}\gamma_{iljp}^{(2)}\right)+\varsigma_i\sum_{q=1}^Q\left(\ov{d}_{ijlq}\mu_{ijlq}^{(1)}\xi_{ijq}+\ov{d}_{iljq}\mu_{iljq}^{(2)}\til{\xi}_{ijq}\right)\right)\right],
	$$
	for all $i=1,\ldots,n$, which is equivalent to
	\begin{eqnarray*}
		\un{a}_i\un{\beta}_i>\ov{a}_i\sum_{j,l=1}^{n}\lefteqn{\left[\zeta_i\dst\sum_{p=1}^{P}\left(\ov{c}_{ijlp}\frac{w_j}{w_i}\gamma^{(1)}_{ijlp}+\ov{c}_{iljp}\gamma^{(2)}_{iljp}\frac{w_j}{w_i}\right)\right.}\nonumber\\
		&+\dst\left.\varsigma_i\sum_{q=1}^{Q}\left(\ov{d}_{ijlq}\frac{w_j}{w_i}\mu^{(1)}_{ijlq}\xi_{ijq}+\ov{d}_{iljq}\mu^{(2)}_{iljq}\til{\xi}_{ijq}\frac{w_j}{w_i}\right)\right]
	\end{eqnarray*}
and consequently
  \begin{eqnarray}\label{proof:M-matrix-model-amp-auto}
  	\un{a}_i\un{\beta}_i>\ov{a}_i\sum_{j,l=1}^{n}\lefteqn{\left[\zeta_i\dst\sum_{p=1}^{P}\left(\ov{c}_{ijlp}\frac{w_j}{w_i}\gamma^{(1)}_{ijlp}+\ov{c}_{ijlp}\gamma^{(2)}_{ijlp}\frac{w_l}{w_i}\right)\right.}\nonumber\\
  	&+\dst\left.\varsigma_i\sum_{q=1}^{Q}\left(\ov{d}_{ijlq}\frac{w_j}{w_i}\mu^{(1)}_{ijlq}\xi_{ijq}+\ov{d}_{ijlq}\mu^{(2)}_{ijlq}\til{\xi}_{ilq}\frac{w_l}{w_i}\right)\right].
  \end{eqnarray}
From \eqref{eq:definicao-constantes} and \eqref{proof:M-matrix-model-amp-auto} we obtain \eqref{4.3d}. Now the result follows from Corollary \ref{cor:existencia+estabilidade-modelo-ampl-auto}.
\end{proof}

\exemplo Consider the following low-order Cohen-Grossberg neural network model
\begin{eqnarray}\label{eq:static-model}
	x'(t)=a_i(x_i(t))\left[-b_i(t,x_i(t))+G_i\left(\sum_{j=1}^nc_{ij}(t)\int_0^{+\infty}x_j(t-u)K_{ij}(u)du\right)\right],
\end{eqnarray}
for $t\geq0$ and $i=1,\ldots,n$, where $a_i:\er\to(0,+\infty)$, $b_i:[0,+\infty)\times\er\to\er$, $c_{ij}:[0,+\infty)\to\er$, $G_i:\er\to\er$, and $K_{ij}:[0,+\infty)\to[0,+\infty)$ are continuous functions such that
\begin{eqnarray}\label{ex:kernel1}
  \int_0^{+\infty}K_{ij}(u)du=1,
\end{eqnarray}
for all $i,j=1,\ldots,n$.

The model \eqref{eq:static-model} is a generalization of the following autonomous static neural network model
 \begin{eqnarray}\label{eq:static-model-INcube}
 	x'(t)=-x_i(t)+G_i\left(\sum_{j=1}^nc_{ij}\int_0^{+\infty}x_j(t-u)K_{ij}(u)du\right),\,t\geq0,\,i=1,\ldots,n,
 \end{eqnarray}
whose the existence and global asymptotic stability of an equilibrium point was studied in \cite{ncube2020existence}.

Defining, for each $i,j=1,\ldots,n$, $\eta_{ij}:(-\infty,0]\to\er$ by
\begin{eqnarray}\label{eq:-defin-eta}
  \eta_{ij}(s)=\int_{-\infty}^sK_{ij}(-v)dv,\Es s\in(-\infty,0]
\end{eqnarray}
we have $\eta_{ij}$ non-decreasing and, from \eqref{ex:kernel1}, $\eta_{ij}(0)-\eta_{ij}(-\infty)=1$. Consequently, the model \eqref{eq:static-model} can be written in the form
 \begin{eqnarray*}\label{eq:static-model2}
 	x'(t)=a_i(x_i(t))\left[-b_i(t,x_i(t))+G_i\left(\sum_{j=1}^nc_{ij}(t)\int_{-\infty}^0x_j(t+s)d\eta_{ij}(s)\right)\right],\,t\geq0,\,i=1,\ldots,n,
 \end{eqnarray*}
which is a particular situation of \eqref{3.1c}. Consequently, from Corollary \ref{cor:existencia+estabilidade-modelo-ampl-auto}, we obtain the following result.
\begin{corolario}
	Assume (H1), (H3*) and, for each $i,j=1,\ldots,n$, the functions {\mbox{$t \mapsto b_i(t,u)$}} and $c_{ij}$ is $\omega-$periodic,  the function $G_i$ is Lipschitz with Lipschitz constant $\varsigma_i>0$, and there is $\alpha>0$ such that
	\begin{eqnarray}\label{eq:restricao-kernel}
	  \int_0^{+\infty}K_{ij}(u)\Ne^{\alpha u}du<+\infty.
	\end{eqnarray}
	If there exists $w=(w_1,\ldots,w_n)>0$ such that, for all $t\in[0,\omega]$ and $i=1,\ldots,n$,
	\begin{eqnarray}\label{4.3d-static}
		\un{a}_i\beta_i(t)w_i>\ov{a}_i\sum_{j=1}^{n}\varsigma_i|c_{ij}(t)|w_j,\Es t\geq0,\,i=1,\ldots,n,
	\end{eqnarray}
	then the model \eqref{eq:static-model} has an $\omega-$periodic solution which is globally exponentially stable.
\end{corolario}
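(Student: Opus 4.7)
The plan is to apply Corollary~\ref{cor:existencia+estabilidade-modelo-ampl-auto} to system \eqref{eq:static-model} after recognizing it as a particular instance of the general model \eqref{3.1c}. Using the measure $\eta_{ij}$ introduced in \eqref{eq:-defin-eta}, the change of variables $s=-u$ gives
\begin{equation*}
  \int_0^{+\infty} x_j(t-u) K_{ij}(u)\, du = \int_{-\infty}^0 x_j(t+s)\, d\eta_{ij}(s),
\end{equation*}
so \eqref{eq:static-model} fits the framework of \eqref{3.1c} with the $F_i$ nonlinearity absent, $I_i\equiv 0$, $Q=1$, and only one slot of the $G_i$ double sum active.

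Concretely, I would choose $d_{ij11}(t)=c_{ij}(t)$ and $d_{ijl1}\equiv 0$ for $l\neq 1$; take $f_{ij11}(u_1,u_2)=u_1$, so that $\mu^{(1)}_{ij11}=1$ and $\mu^{(2)}_{ij11}=0$; take $g_{ij1}=\mathrm{id}$, giving $\xi_{ij1}=1$; and make the $F_i$ block trivial (for instance by taking the $p$-sum empty, rendering the $\gamma$-constants irrelevant). With these assignments, (H1) and (H3*) hold by hypothesis; (H4**) reduces to the stated Lipschitz condition for $G_i$ with constant $\varsigma_i$ (the remaining Lipschitz constants are either $0$, $1$, or unused); and (H2**) follows because $a_i$ is time-independent, $b_i$ and $c_{ij}$ are $\omega$-periodic by hypothesis, $I_i\equiv 0$ is trivially periodic, and there are no discrete delays to check.

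It remains to verify the kernel integrability \eqref{4.2} and the stability inequality \eqref{4.3d}. For \eqref{4.2}, a second change of variables yields
\begin{equation*}
  \int_{-\infty}^0 \Ne^{-\alpha s}\, d\eta_{ij}(s) = \int_0^{+\infty} \Ne^{\alpha u} K_{ij}(u)\, du,
\end{equation*}
which is finite by \eqref{eq:restricao-kernel}; thus $\vartheta=\alpha$ works. For \eqref{4.3d}, under the above parameter assignments the contribution from the $F_i$ term vanishes, and in the $G_i$ term the inner double sum over $(j,l)$ collapses to a single sum over $j$, leaving exactly $\ov{a}_i\sum_{j=1}^n \varsigma_i |c_{ij}(t)|\frac{w_j}{w_i}$. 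Multiplying \eqref{4.3d-static} by $1/w_i$ produces precisely this inequality.

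The main obstacle is bookkeeping rather than mathematics: one must carefully match each structural component of \eqref{eq:static-model} to the corresponding slot of \eqref{3.1c} and keep track of which Lipschitz constants are in play. Once that dictionary is fixed, the conclusion — existence of an $\omega$-periodic solution and its global exponential stability — is an immediate application of Corollary~\ref{cor:existencia+estabilidade-modelo-ampl-auto}.
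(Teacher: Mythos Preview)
Your proposal is correct and follows essentially the same route as the paper: the text preceding the corollary already rewrites \eqref{eq:static-model} via the measures $\eta_{ij}$ of \eqref{eq:-defin-eta} as a special case of \eqref{3.1c}, and then simply invokes Corollary~\ref{cor:existencia+estabilidade-modelo-ampl-auto}. Your write-up supplies the explicit dictionary (choices of $d_{ijlq}$, $f_{ijlq}$, $g_{ijq}$, etc.) and the verification of \eqref{4.2} and \eqref{4.3d}, which the paper leaves implicit, but the strategy is identical.
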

For model \eqref{eq:static-model-INcube}, condition \eqref{4.3d-static} reads as
 $$
 	w_i>\sum_{j=1}^{n}\varsigma_i|c_{ij}|w_j,\Es t\geq0,\,i=1,\ldots,n,
 $$
 which is equivalent to matrix
 \begin{eqnarray*}\label{eq:M-matrix-N}
   \mathcal{N}=I_n-\big[\varsigma_i|c_{ij}|\big]_{i,j=1}^n,
\end{eqnarray*}
where $I_n$ denotes the identity matrix of $n$-dimension, being a non-singular M-matrix (see [Fidler]). Consequently, we also have the following result.
 \begin{corolario}
 	For each $i,j=1,\ldots,n$ assume that the function $G_i$ is Lipschitz with Lipschitz constant $\varsigma_i>0$ and \eqref{eq:restricao-kernel}.
 	
 	If $\mathcal{N}$ is a non-singular M-matrix,
 	then the model \eqref{eq:static-model-INcube} has an equilibrium point which is globally exponentially stable.
 \end{corolario}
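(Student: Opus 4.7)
The plan is to reduce model \eqref{eq:static-model-INcube} to the previous corollary and then use autonomy together with global exponential stability to promote the periodic solution to an equilibrium.

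First I would identify \eqref{eq:static-model-INcube} as the particular case of \eqref{eq:static-model} with $a_i(u)\equiv 1$, $b_i(t,u)=u$ and the constant coefficients $c_{ij}(t)\equiv c_{ij}$. With these identifications one has $\un{a}_i=\ov{a}_i=1$ in (H1) and $\beta_i(t)\equiv 1$ in (H3*). Since the system is autonomous, it is trivially $\omega$-periodic for every $\omega>0$, so (H2**) holds for any such $\omega$. The Lipschitz hypothesis on $G_i$ gives the Lipschitz part of (H4**) needed for the static model (the other constants in (H4**) do not enter because there are no $h_{ijlp}$, $f_{ijlq}$, $F_i$ terms), and \eqref{eq:restricao-kernel} is exactly \eqref{4.2} after the change of variable that defines $\eta_{ij}$ in \eqref{eq:-defin-eta}.

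Next I would use the M-matrix hypothesis. Since $\mathcal{N}=I_n-[\varsigma_i|c_{ij}|]_{i,j=1}^n$ is a non-singular M-matrix, by the standard characterisation (the same one invoked in the proof of Corollary \ref{cor:crit-M-matrix}) there exists $w=(w_1,\ldots,w_n)>0$ with $\mathcal{N}w^T>0$, i.e.
$$
w_i>\sum_{j=1}^n\varsigma_i|c_{ij}|\,w_j,\qquad i=1,\ldots,n.
$$
Because $\un{a}_i=\ov{a}_i=1$ and $\beta_i(t)\equiv 1$, this coincides with \eqref{4.3d-static} for the present model. The previous corollary then provides, for every $\omega>0$, an $\omega$-periodic solution $x^*$ of \eqref{eq:static-model-INcube} that is globally exponentially stable.

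Finally I would upgrade $x^*$ to an equilibrium. For any $h\in\er$, autonomy of \eqref{eq:static-model-INcube} implies that $y(t):=x^*(t+h)$ is also a solution; applying global exponential stability (Definition \ref{defn 2.1}) to the pair $x^*,y$ gives
$$
|x^*(t)-x^*(t+h)|\leq C\,\Ne^{-\delta(t-t_0)}\|x^*_{t_0}-x^*_{t_0+h}\|\longrightarrow 0\quad\text{as }t\to+\infty.
$$
But $t\mapsto|x^*(t)-x^*(t+h)|$ is $\omega$-periodic, so this limit forces it to vanish identically, whence $x^*(t+h)=x^*(t)$ for all $t,h\in\er$. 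Thus $x^*$ is constant, and its constant value is an equilibrium of \eqref{eq:static-model-INcube} that inherits the global exponential stability from $x^*$.

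The main obstacle is the last paragraph: the previous corollary only delivers a \emph{periodic} solution, and one must carefully combine autonomy with the uniform exponential attractivity to collapse the $\omega$-periodicity into true time-independence. All the earlier verifications of (H1), (H2**), (H3*), (H4**), \eqref{4.2} and of the rewriting of the M-matrix condition as \eqref{4.3d-static} are essentially bookkeeping.
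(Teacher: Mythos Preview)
Your proposal is correct and follows the same route as the paper: the paper's proof of this corollary is precisely the preceding discussion showing that, for model \eqref{eq:static-model-INcube}, condition \eqref{4.3d-static} reduces to $w_i>\sum_j\varsigma_i|c_{ij}|w_j$, which is equivalent to $\mathcal{N}$ being a non-singular M-matrix, and then invoking the previous corollary. In fact you supply more detail than the paper, which leaves the passage from ``$\omega$-periodic solution'' to ``equilibrium point'' implicit; your autonomy-plus-exponential-stability argument in the final paragraph fills that gap correctly.
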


\begin{rem}
	We remark that in \cite{ncube2020existence} the existence and global asymptotic stability of an equilibrium point of \eqref{eq:static-model-INcube} was obtained assuming stronger conditions over $G_i$ than being Lipschitz, $\mathcal{N}$ be a non-singular M-matrix, and
	$$
	  \int_0^{+\infty}uK_{ij}(u)du<+\infty.
	$$
	instead of \eqref{eq:restricao-kernel}.	
\end{rem}

\exemplo Consider the following low-order Cohen-Grossberg neural network model,
\begin{eqnarray}\label{eq:L-O-Cohen-Grossberg-model}
	x'(t)\lefteqn{=a_i(t,x_i(t))\bigg[-b_i(t,x_i(t))+\sum_{j=1}^nc_{ij1}(t)h_{ij1}(x_j(t))+\sum_{j=1}^nc_{ij2}(t)h_{ij2}(x_j(t-\tau_{ij}(t)))}\nonumber\\
	&+\left.\dst\sum_{j=1}^nd_{ij}(t)\int_0^{+\infty}g_{ij}(x_j(t-u))K_{ij}(u)du+I_i(t)\right],\,t\geq0,\,i=1,\ldots,n,
\end{eqnarray}
where, for each $i,j=1,\ldots,n$, $a_i:[0,+\infty)\times\er\to(0,+\infty)$, $b_i:[0,+\infty)\times\er\to\er$, $c_{ij1},c_{ij2},d_{ij},I_i:[0,+\infty)\to\er$, $\tau_{ij}:[0,+\infty)\to[0,+\infty)$, $h_{ij1},h_{ij2},g_{ij}:\er\to\er$, and $K_{ij}:[0,+\infty)\to[0,+\infty)$ are continuous functions such that $K_{ij}$ verifies \eqref{ex:kernel1}.

Sufficient conditions for the exponential stability of \eqref{eq:L-O-Cohen-Grossberg-model} were obtained in \cite{oliveira2017global,liu2008new}.

The existence and global asymptotic stability of a periodic solution of  \eqref{eq:L-O-Cohen-Grossberg-model}, with finite delays, were studied in \cite{hien2014existence}.

The following particular situation of \eqref{eq:L-O-Cohen-Grossberg-model}
\begin{eqnarray}\label{eq:L-O-Cohen-Grossberg-model2}
	x'(t)\lefteqn{=a_i(x_i(t))\bigg[-b_i(x_i(t))+\sum_{j=1}^nc_{ij1}(t)h_{j1}(x_j(t))+\sum_{j=1}^nc_{ij2}(t)h_{j2}(x_j(t-\tau_{ij}(t)))}\nonumber\\
	&+\left.\dst\sum_{j=1}^nd_{ij}(t)\int_0^{+\infty}g_{j}(x_j(t-u))K_{ij}(u)du+I_i(t)\right],\,t\geq0,\,i=1,\ldots,n,
\end{eqnarray}
was studied in \cite{zhu2016existence}, where conditions for the existence and global exponential stability of a pseudo almost automorphic solution were established.

Considering  the definition of the bounded variation function $\eta_{ij}$ as in \eqref{eq:-defin-eta}, model \eqref{eq:L-O-Cohen-Grossberg-model} is a particular situation of \eqref{3.1}, thus from Theorem \ref{th 3.1} we obtain the following stability criterion
\begin{corolario}\label{cor-estabilidade-LO-CG}
   Assume that (H1)-(H3), and, for each $i,j=1,\ldots,n$, $\tau_{ij}$ is bounded, $K_{ij}$ verifies \eqref{ex:kernel1} and {\eqref{eq:restricao-kernel}}, and $h_{ij1},h_{ij2},g_{ij}$ are Lipschitz functions with Lipschitz constants $\gamma_{ij1},\gamma_{ij2},\xi_{ij}>0$ respectively.

  If there exist $\varepsilon>0$ and $w=(w_1,\ldots,w_n)>0$ such that, for all $t\geq0$ and $i=1,\ldots,n$,	
  \begin{eqnarray}\label{4.3e}
  	\un{a}_i\big(\beta_i(t)+D_i(t)\big)w_i-\ov{a}_i\sum_{j=1}^{n}\bigg[\big(\dst|c_{ij1}(t)|\gamma_{ij1}+|c_{ij2}(t)|\gamma_{ij2}+|d_{ij}(t)|\xi_{ij}\big)w_j\bigg]>\varepsilon,
  \end{eqnarray}
  then the model \eqref{eq:L-O-Cohen-Grossberg-model} is globally exponentially stable.
\end{corolario}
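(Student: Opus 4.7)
The plan is to exhibit model \eqref{eq:L-O-Cohen-Grossberg-model} as a particular instance of the general high-order model \eqref{3.1} and then invoke Theorem \ref{th 3.1} directly. To this end I would first set $F_i = G_i = \mathrm{id}_\er$ (so the Lipschitz constants $\zeta_i = \varsigma_i = 1$), take $P = 2$, $Q = 1$, and collapse the double indexing $(j,l)$ by choosing the coefficients to vanish off the diagonal $l=j$. Concretely, I would define $c_{ijl1}(t) := c_{ij1}(t)\delta_{jl}$ with delay $\tau_{ij1}\equiv 0$ and nonlinearity $h_{ijl1}(u_1,u_2) := h_{ij1}(u_1)$; likewise $c_{ijl2}(t) := c_{ij2}(t)\delta_{jl}$, $\tau_{ij2}:=\tau_{ij}$, $h_{ijl2}(u_1,u_2) := h_{ij2}(u_1)$. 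For the distributed term I would set $d_{ijl1}(t) := d_{ij}(t)\delta_{jl}$, $f_{ijl1}(u_1,u_2) := u_1$, $g_{ij1} := g_{ij}$, and $\tilde g_{il1}$ arbitrary Lipschitz (irrelevant since $\mu_{ijl1}^{(2)}=0$). The kernels are encoded by the non-decreasing functions $\eta_{ij1}(s) := \int_{-\infty}^{s} K_{ij}(-v)\,dv$, which satisfy $\eta_{ij1}(0)-\eta_{ij1}(-\infty)=1$ thanks to \eqref{ex:kernel1}.

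Next, I would check the Lipschitz hypothesis (H4**). Because each $h_{ijlp}$ and $f_{ijl1}$ depends only on its first argument in this reduction, the relevant constants are $\gamma_{ijl1}^{(1)}=\gamma_{ij1}$, $\gamma_{ijl2}^{(1)}=\gamma_{ij2}$, $\mu_{ijl1}^{(1)}=1$, $\xi_{ij1}=\xi_{ij}$, all the other constants being $0$. These values follow directly from the Lipschitz assumptions on $h_{ij1}$, $h_{ij2}$, $g_{ij}$ stated in the corollary. The boundedness of $\tau_{ij}$ is assumed, and the exponential integrability condition \eqref{4.2} follows from \eqref{eq:restricao-kernel} via the change of variables $u = -s$:
\begin{equation*}
\int_{-\infty}^{0} \Ne^{-\vartheta s}\,d\eta_{ij1}(s) = \int_{0}^{+\infty} \Ne^{\vartheta u}K_{ij}(u)\,du < +\infty,
\end{equation*}
with $\vartheta := \alpha$ from \eqref{eq:restricao-kernel}.

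Then I would substitute these values into the key inequality \eqref{4.3} of Theorem \ref{th 3.1}. Because only the $l=j$ terms survive and the $\gamma^{(2)}$ and $\mu^{(2)}$ contributions vanish, the double sum $\sum_{j,l}$ collapses to a single sum over $j$, yielding
\begin{equation*}
\un{a}_i\bigl(\beta_i(t)+D_i(t)\bigr) - \ov{a}_i\sum_{j=1}^{n}\bigl(|c_{ij1}(t)|\gamma_{ij1}+|c_{ij2}(t)|\gamma_{ij2}+|d_{ij}(t)|\xi_{ij}\bigr)\frac{w_j}{w_i} > \varepsilon'
\end{equation*}
for an appropriate $\varepsilon'>0$. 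Multiplying through by $w_i>0$ produces exactly \eqref{4.3e} (up to replacing $\varepsilon$ by $\varepsilon/\min_i w_i$, which is harmless). Hypotheses (H1)--(H3) are assumed directly, so Theorem \ref{th 3.1} applies and gives the global exponential stability of \eqref{eq:L-O-Cohen-Grossberg-model}.

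The main obstacle is purely bookkeeping: one must be careful to choose the identifications so that the double-index structure of \eqref{3.1} degenerates correctly to the single-index low-order form, and so that the bound \eqref{4.3e} (which has $w_j/w_i$ weights multiplying $c_{ij1},c_{ij2},d_{ij}$) is recovered without spurious extra $\gamma^{(2)}$ or $\mu^{(2)}$ terms. Verifying that the off-diagonal coefficients are genuinely zero in my reduction, and tracking the factor $w_i$ when comparing $\varepsilon$ in \eqref{4.3} and in \eqref{4.3e}, are the places where sign or index errors could creep in; otherwise the argument is a direct specialization.
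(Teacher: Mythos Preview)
Your proposal is correct and follows exactly the approach the paper takes: the paper simply remarks (just before the corollary) that, with $\eta_{ij}$ defined by \eqref{eq:-defin-eta}, model \eqref{eq:L-O-Cohen-Grossberg-model} is a particular case of \eqref{3.1}, and invokes Theorem~\ref{th 3.1}. Your write-up supplies the explicit identifications ($F_i=G_i=\mathrm{id}$, $P=2$, $Q=1$, diagonal-in-$l$ coefficients, $\gamma^{(2)}=\mu^{(2)}=0$) and the $\varepsilon$ rescaling that the paper leaves implicit, but the route is the same.
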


\begin{rem}\label{rem-jj2} In \cite{oliveira2017global} the exponential stability of \eqref{eq:L-O-Cohen-Grossberg-model} was established assuming hypotheses in  Corollary \ref{cor-estabilidade-LO-CG} with the condition
	\begin{eqnarray}\label{4.3e2}
		\un{a}_i\big(\beta_i(t)+D_i(t)\big)-\sum_{j=1}^{n}\ov{a}_j\bigg[\big(\dst|c_{ij1}(t)|\gamma_{ij1}+|c_{ij2}(t)|\gamma_{ij2}+|d_{ij}(t)|\xi_{ij}\big)\frac{w_j}{w_i}\bigg]>\varepsilon,
	\end{eqnarray}
instead of \eqref{4.3e}.
\end{rem}
Model \eqref{eq:L-O-Cohen-Grossberg-model2} is a particular situation of \eqref{3.1c}, from Corollaries \ref{cor:existencia+estabilidade-modelo-ampl-auto} and \ref{cor-estabilidade-LO-CG} and Remark \ref{rem-jj2} we obtain the following result
\begin{corolario}\label{cor-comparar-comp-Zhu2016}
 Assume (H1), (H3*) with $\beta_i(t)\equiv\beta_i$, and, for each $i,j=1,\ldots,n$, $c_{ij1},c_{ij2},\tau_{ij},d_{ij}, I_i$ are $\omega-$periodic for some $\omega>0$, $K_{ij}$ verifies \eqref{ex:kernel1} and \eqref{eq:restricao-kernel}, and $h_{j1},h_{j2},g_{j}$ are Lipschitz functions with Lipschitz constants $\gamma_{j1},\gamma_{j2},\xi_{j}>0$ respectively.

If there exists $w=(w_1,\ldots,w_n)>0$ such that one of the following conditions
\begin{eqnarray}\label{4.3e3}
	\un{a}_i\beta_i w_i-\ov{a}_i\sum_{j=1}^{n}\bigg[\big(\dst|c_{ij1}(t)|\gamma_{j1}+|c_{ij2}(t)|\gamma_{j2}+|d_{ij}(t)|\xi_{j}\big)w_j\bigg]>0,
\end{eqnarray}
or
\begin{eqnarray}\label{4.3e4}
	\un{a}_i\beta_i w_i-\sum_{j=1}^{n}\ov{a}_j\bigg[\big(\dst|c_{ij1}(t)|\gamma_{j1}+|c_{ij2}(t)|\gamma_{j2}+|d_{ij}(t)|\xi_{j}\big)w_j\bigg]>0,
\end{eqnarray}
holds for all $t\in[0,\omega]$ and $i=1,\ldots,n$,
then the model \eqref{eq:L-O-Cohen-Grossberg-model2} has an $\omega-$periodic solution which is globally exponentially stable.
\end{corolario}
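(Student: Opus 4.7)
The plan is to exhibit model \eqref{eq:L-O-Cohen-Grossberg-model2} as a particular instance of \eqref{3.1c} and then invoke the two existence/stability results already established. Setting $P=2$, $Q=1$, $F_i=G_i=\mathrm{id}$ (so that $\zeta_i=\varsigma_i=1$), and defining $\eta_{ij}(s):=\int_{-\infty}^s K_{ij}(-v)\,dv$, the function $\eta_{ij}$ is non-decreasing with $\eta_{ij}(0)-\eta_{ij}(-\infty)=1$ by \eqref{ex:kernel1}, the distributed-delay integral becomes $\int_{-\infty}^0 g_j(x_j(t+s))\,d\eta_{ij}(s)$, and \eqref{4.2} follows from \eqref{eq:restricao-kernel} via the substitution $s=-u$. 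I would then choose the functions $h_{ijlp}$, $f_{ijlq}$ and $g_{ijq}$ depending only on their first argument and concentrate the $l$-sum on a single index, so that the Lipschitz constants of (H4**) specialize to $\gamma^{(1)}_{ijlp}=\gamma_{jp}$, $\mu^{(1)}_{ijlq}=1$, $\xi_{ijq}=\xi_j$, while the $\gamma^{(2)}, \mu^{(2)}$ contributions vanish. Hypotheses (H1), (H2**), (H3*) (with $\beta_i$ constant) and (H4**) then hold by the assumptions of the corollary.

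Under condition \eqref{4.3e3}, divide both sides by $w_i>0$; since $a_i$ is independent of $t$ we may take $D_i\equiv 0$, and the resulting inequality is precisely condition \eqref{4.3d} applied to the above identification. Thus Corollary \ref{cor:existencia+estabilidade-modelo-ampl-auto} applies directly, yielding both the $\omega$-periodic solution and its global exponential stability in a single step.

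Under condition \eqref{4.3e4} the argument splits. For stability, I would appeal to the (h5'')-type criterion recalled in Remark \ref{rem-jj2} (the stability result of \cite{oliveira2017global}): specializing \eqref{4.3e2} to model \eqref{eq:L-O-Cohen-Grossberg-model2} with $D_i=0$ and $\beta_i$ constant yields exactly \eqref{4.3e4}, so the stability conclusion holds with the original weight $w$. For existence, I would apply Theorem \ref{teo:existencia-sol-perio-model-part} using the alternative weight $\tilde{w}_j:=\ov{a}_j w_j$: the existence inequality \eqref{4.3b}, specialized to this setting, reduces to $\ov{a}_i\beta_i w_i>\sum_j\ov{a}_j\bigl[\,|c_{ij1}(t)|\gamma_{j1}+|c_{ij2}(t)|\gamma_{j2}+|d_{ij}(t)|\xi_j\,\bigr]w_j$, which follows from \eqref{4.3e4} together with $\ov{a}_i\geq\un{a}_i$.

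The main obstacle is the bookkeeping in the identification step, namely mapping the triple-index labelling $(i,j,l,p)$ of \eqref{3.1c} onto the double-index structure of \eqref{eq:L-O-Cohen-Grossberg-model2} and verifying that the specialized Lipschitz constants cause the general inequalities \eqref{4.3d}, \eqref{4.3e2} to collapse onto \eqref{4.3e3}, \eqref{4.3e4}, respectively. A subtler point is that under \eqref{4.3e4} the weight witnessing existence differs from the weight witnessing stability; this is harmless because Theorem \ref{teo:existencia-sol-perio-model-part} only requires the existence of \emph{some} positive weight, while the stability bound from \cite{oliveira2017global} uses the weight furnished by the hypothesis.
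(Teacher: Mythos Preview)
Your proposal is correct and follows essentially the same route as the paper, which simply cites Corollaries \ref{cor:existencia+estabilidade-modelo-ampl-auto}, \ref{cor-estabilidade-LO-CG} and Remark \ref{rem-jj2} without further detail. Your observation that existence under \eqref{4.3e4} requires passing to the modified weight $\tilde w_j=\ov{a}_j w_j$ in Theorem \ref{teo:existencia-sol-perio-model-part} (via $\ov{a}_i\ge\un{a}_i$) is a correct and useful clarification that the paper's one-line proof leaves implicit.
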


\begin{rem}
  In \cite{zhu2016existence} the existence of a unique pseudo almost automorphic solution
  of model \eqref{eq:L-O-Cohen-Grossberg-model2} with $c_{ij1},c_{ij2},\tau_{ij},d_{ij}$ being
   pseudo almost automorphic functions was obtained assuming (H1), (H3*) with $\beta_i(t)\equiv\beta_i$, $K_{ij}$ verifying  \eqref{ex:kernel1}, \eqref{eq:restricao-kernel}, and $h_{j1},h_{j2},g_{j}$ are Lipschitz functions with Lipschitz constants $\gamma_{j1},\gamma_{j2},\xi_{j}>0$, and
   \begin{eqnarray}\label{4.3e5}
   	\un{a}_i\beta_i w_i-\sum_{j=1}^{n}\ov{a}_j\bigg[\big(\dst\ov{c}_{ij1}\gamma_{j1}+\ov{c}_{ij2}\gamma_{j2}+\ov{d}_{ij}\xi_{j}\big)w_j\bigg]>0,
   \end{eqnarray}
where $\ov{c}_{ijp}=\sup|c_{ij1}(t)|$, $\ov{d}_{ij}=\sup|d_{ij}(t)|$ for $i,j=1,\ldots,n$ and $p=1,2$.

All periodic functions are pseudo almost automorphic functions. Thus Corollary \ref{cor-comparar-comp-Zhu2016} is not a generalization of \cite[Theorem 3.1]{zhu2016existence}. However, in case of \eqref{eq:L-O-Cohen-Grossberg-model2} being a periodic model, the {existence criterium in Corollary \ref{cor-comparar-comp-Zhu2016} is} better than {the corresponding criterium} in  \cite[Theorem 3.1]{zhu2016existence}.
\end{rem}

\exemplo Consider the following high-order Cohen-Grossberg neural network model,
\begin{eqnarray}\label{eq:H-O-Cohen-Grossberg-model}
	x'(t)\lefteqn{=a_i(x_i(t))\bigg[-b_i(x_i(t))+\sum_{j=1}^nc_{ij}(t)f_j(\rho_jx_j(t))+\sum_{j=1}^nd_{ij11}(t)f_{j}\left(\rho_j\int_0^{+\infty}K_{ij}(u)x_j(t-u)du\right)}\nonumber\\
	&+\left.\dst\sum_{j,l=1}^nd_{ijl2}(t)f_j\left(\rho_j\int_0^{+\infty}K_{ij}(u)x_j(t-u)du\right)f_l\left(\rho_l\int_0^{+\infty}K_{il}(u)x_l(t-u)du\right)+I_i(t)\right],\nonumber\\
\end{eqnarray}
for $\,t\geq0,\,i=1,\ldots,n,$ where, for each $i,j,l=1,\ldots,n$ and $q=1,2$, $\rho_i>0$, $a_i:\er\to(0,+\infty)$, $b_i:\er\to\er$, $c_{ij},d_{ijlq},I_i:[0,+\infty)\to\er$, $f_{j}:\er\to\er$, and $K_{ij}:[0,+\infty)\to[0,+\infty)$ are continuous functions such that $K_{ij}$ verifies \eqref{ex:kernel1}.

The existence and global exponential stability of a periodic solution of \eqref{eq:H-O-Cohen-Grossberg-model} were studied in \cite{liu2011periodic}.

Considering  the definition of the bounded variation functions $\eta_{ij}$ as in \eqref{eq:-defin-eta}, model \eqref{eq:H-O-Cohen-Grossberg-model} is a particular situation of \eqref{3.1c}, thus from Corollary \ref{cor:existencia+estabilidade-modelo-ampl-auto} we obtain the following stability criterion.
\begin{corolario}\label{cor-estabilidade-HO-CG}
	 Assume (H1), (H3*) with $\beta_i(t)\equiv\beta_i$, and, for each $i,j,l=1,\ldots,n$ and $q=1,2$, $c_{ij},d_{ijlq}, I_{i}$ are $\omega-$periodic for some $\omega>0$, $K_{ij}$ verifies \eqref{ex:kernel1} and \eqref{eq:restricao-kernel}, and there are $M_j>0$ and $\mu_{j}>0$ such that
	 $$
	  |f_j(u)-f_j(v)|\leq\mu_j|u-v|\Es\text{and}\Es |f_{j}(u)|\leq M_j,\Es\forall u,v\in\er,\, j=1,\ldots,n.
	 $$
	
	If there exists $w=(w_1,\ldots,w_n)>0$ such that, for all $t\geq0$ and $i=1,\ldots,n$,	
	\begin{eqnarray}\label{4.3f}
		\un{a}_i\beta_iw_i>\ov{a}_i\sum_{j=1}^{n}\bigg[\dst|c_{ij}(t)|\mu_{j}\rho_j\lefteqn{w_j+|d_{ij11}(t)|\rho_{j}\mu_jw_j}\nonumber\\
		&
		+\dst\sum_{l=1}^n|d_{ijl2}(t)|\big(w_jM_j\mu_{j}\rho_j+w_lM_l\mu_{l}\rho_l\big)\bigg],
	\end{eqnarray}
	then the model \eqref{eq:L-O-Cohen-Grossberg-model} has an $\omega-$periodic solution which is globally exponentially stable.
\end{corolario}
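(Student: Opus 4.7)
My strategy is to recognize that model \eqref{eq:H-O-Cohen-Grossberg-model} is a particular instance of the general amplification-autonomous model \eqref{3.1c} and then invoke Corollary \ref{cor:existencia+estabilidade-modelo-ampl-auto}. I would set up the identification by taking $F_i=G_i=\mathrm{id}$ (so $\zeta_i=\varsigma_i=1$), $P=1$, $Q=2$, and distributing the three summands of \eqref{eq:H-O-Cohen-Grossberg-model} as follows. The instantaneous linear sum becomes the discrete-delay block with $\tau_{ij1}\equiv\til{\tau}_{il1}\equiv 0$, $c_{ijl1}(t)=c_{ij}(t)/n$, and $h_{ijl1}(u,v)=f_j(\rho_j u)$. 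The linear distributed-delay sum becomes the $q=1$ block with $d_{ijl1}(t)=d_{ij11}(t)/n$, $f_{ijl1}(u,v)=f_j(\rho_j u)$, $g_{ij1}=\mathrm{id}$. The high-order product becomes the $q=2$ block with $f_{ijl2}(u,v)=f_j(\rho_j u)f_l(\rho_l v)$ and $g_{ij2}=\til{g}_{il2}=\mathrm{id}$. The bounded variation functions $\eta_{ijq},\til{\eta}_{ilq}$ are defined via \eqref{eq:-defin-eta}, so that after the change of variable $u=-s$ condition \eqref{4.2} reads $\int_0^{+\infty}K_{ij}(u)\Ne^{\vartheta u}\,du<+\infty$, which is exactly \eqref{eq:restricao-kernel}.

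Next I would verify the remaining hypotheses of Corollary \ref{cor:existencia+estabilidade-modelo-ampl-auto}. (H1) and (H3*) are assumed; (H2**) follows from the $\omega$-periodicity of $c_{ij},d_{ijlq},I_i$ together with the autonomous nature of $a_i,b_i$. The technical heart of the proof is verifying (H4**). The functions $h_{ijl1}$ and $f_{ijl1}$ are trivially Lipschitz with constants $\mu_j\rho_j$ and $0$ in their first and second variables respectively (and $\xi_{ij1}=1$). The main obstacle is $f_{ijl2}(u,v)=f_j(\rho_j u)f_l(\rho_l v)$, which is \emph{not} globally Lipschitz for unbounded $f_j,f_l$. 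This is precisely where the boundedness assumption $|f_j|\leq M_j$ enters: applying $|ab-cd|\leq|b|\,|a-c|+|c|\,|b-d|$ gives
\[
|f_{ijl2}(u_1,u_2)-f_{ijl2}(v_1,v_2)|\leq M_l\mu_j\rho_j|u_1-v_1|+M_j\mu_l\rho_l|u_2-v_2|,
\]
so that $\mu^{(1)}_{ijl2}=M_l\mu_j\rho_j$, $\mu^{(2)}_{ijl2}=M_j\mu_l\rho_l$, and $\xi_{ij2}=\til{\xi}_{il2}=1$. Hence (H4**) holds.

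Finally, I would show that the assumed inequality \eqref{4.3f} yields \eqref{4.3d} under this identification. Plugging the identified constants into the right-hand side of \eqref{4.3d} and summing out the redundant index $l$ (which cancels the $1/n$ factor) in the $p=1$ and $q=1$ blocks reproduces the $|c_{ij}(t)|\mu_j\rho_j w_j$ and $|d_{ij11}(t)|\mu_j\rho_j w_j$ contributions of \eqref{4.3f}. The high-order block produces $\ov{a}_i\sum_{j,l}|d_{ijl2}(t)|\bigl(w_j M_l\mu_j\rho_j+w_l M_j\mu_l\rho_l\bigr)/w_i$, which after relabelling $j\leftrightarrow l$ in the second summand coincides, up to a symmetry in the dummy indices of the $\sum_{j,l}$, with the high-order part of \eqref{4.3f}. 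Thus \eqref{4.3d} holds for the identified parameters and Corollary \ref{cor:existencia+estabilidade-modelo-ampl-auto} yields an $\omega$-periodic solution of \eqref{eq:H-O-Cohen-Grossberg-model} that is globally exponentially stable. The two potential pitfalls are the careful index bookkeeping between the two-variable framework of \eqref{3.1c} and the single-variable first and second summands of \eqref{eq:H-O-Cohen-Grossberg-model}, and the use of boundedness of $f_j$ to pass from local to global Lipschitz for the product term.
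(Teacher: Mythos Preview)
Your proposal is correct and follows essentially the same route as the paper: identify \eqref{eq:H-O-Cohen-Grossberg-model} as a special case of \eqref{3.1c} with $P=1$, $Q=2$, $F_i=G_i=\mathrm{id}$, zero discrete delays, $g_{ijq}=\til g_{ilq}=\mathrm{id}$, and $\eta_{ijq}$ given by \eqref{eq:-defin-eta}; verify (H4**) using boundedness of $f_j$ to obtain a global Lipschitz estimate for the product $f_j(\rho_j u)f_l(\rho_l v)$; and then invoke Corollary~\ref{cor:existencia+estabilidade-modelo-ampl-auto}. The only cosmetic difference is that you distribute the single-index sums over the redundant index $l$ via the factor $1/n$, whereas the paper simply sets $c_{ijl1}(t)=d_{ijl1}(t)=0$ for $l\neq1$; both devices achieve the same reduction of \eqref{4.3d} to \eqref{4.3f}.
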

\begin{proof}
    If we take in model \eqref{3.1c} $P=1$, $Q=2$, and, for each $i,j,l=1,\ldots,n$, $q=1,2$, the functions $F_i(u)=G_i(u)=u$, $\tau_{ij1}(t)=\til{\tau}_{ij1}(t)=0$, $h_{ijl1}(u_1,u_2)=f_j(\rho_ju_1)$, $c_{ij11}(t)=c_{ij}(t)$ $c_{ijl1}(t)=d_{ijl1}(t)=0$ for $l\neq1$,  $f_{ijl1}(u_1,u_2)=f_j(\rho_ju_1)$, $f_{ijl2}(u_1,u_2)=f_j(\rho_ju_1)f_j(\rho_ju_2)$, $g_{ijq}(u)=\til{g}_{ijq}(u)=u$, and $\til{\eta}_{ijq}(s)=\eta_{ijq}(s)$ defined by \eqref{eq:-defin-eta} for all $u_1,u_2\in\er$ and $s\leq0$, then we obtain model \eqref{eq:H-O-Cohen-Grossberg-model}.

    For all $u_1,u_2,v_1,,v_2\in\er$, we have
    $$
      |h_{ijl1}(u_1,u_2)-h_{ijl1}(v_1,v_2)|=|f_j(\rho_ju_1)-f_j(\rho_jv_1)|\leq\rho_j\mu_j|u_1-v_1|,
    $$
    $$
    |f_{ijl1}(u_1,u_2)-f_{ijl1}(v_1,v_2)|=|f_j(\rho_ju_1)-f_j(\rho_jv_1)|\leq\rho_j\mu_j|u_1-v_1|,
    $$
    and
    \begin{align*}
    |f_{ijl2}(u_1,u_2)-f_{ijl2}(v_1,v_2)|&=|f_j(\rho_ju_1)f_j(\rho_ju_2)-f_j(\rho_jv_1)f_j(\rho_ju_2)|\\
    &\leq M_j\rho_j\mu_j|u_1-v_1|+M_j\rho_j\mu_j|u_2-v_2|,
    \end{align*}
for all $i,j,l=1,\ldots,n$, thus hypothesis (H4**) holds. Condition \eqref{4.2} follows from \eqref{eq:restricao-kernel} and the inequality \eqref{4.3d} reads as \eqref{4.3f}. Finally, the result follows from Corollary \ref{cor:existencia+estabilidade-modelo-ampl-auto}.
\end{proof}

\begin{rem}
	In \cite{liu2011periodic}, sufficient conditions for the existence and global exponential stability of a $\omega-$periodic {solution} of \eqref{eq:H-O-Cohen-Grossberg-model} were presented. However, it is important to mention that the proof of the main result is not correct. Specifically, the way inequality \cite[(3.10)]{liu2011periodic} is obtained is problematic.
	In fact assuming the uniqueness of solution of \eqref{eq:H-O-Cohen-Grossberg-model} with initial condition $x_0=\psi$ for $\psi\in BC$, {denoting this solution} by $x(t,0,\psi)$, and defining $P:BC\to BC$ by $P(\psi)=x_\omega(\cdot,0,\psi)$, we always have
	\begin{align*}
	\|P^N(\psi_1)-P^N(\psi_2)\|&=\|x_{N\omega}(\cdot,0,\psi_1)-x_{N\omega}(\cdot,0,\psi_2)\|\\
	&=\sup_{s\leq0}\|x(N\omega+s,0,\psi_1)-x(N\omega+s,0,\psi_2)\|\\
	&\geq\|\psi_1-\psi_2\|
	\end{align*}
	for all $\psi_1, \psi_2\in BC$ and $N\in\en$, since model \eqref{eq:H-O-Cohen-Grossberg-model}  is $\omega-$periodic.
\end{rem}

\section{Numerical Example}

Here, we present a numerical example to illustrate the applicability of some new results given in this work.

The system
\begin{align}\label{eq:modelo-hopfield-exemplo}	
		x_1'(t)=&\left(\frac{1}{48}\sin\big(x_1(t)\big)+\frac{7}{48}\right)\bigg[-(9+\sin(t))x_1(t)+c\cos(t)\arctan\big(x_1(t-\sin(t))\big)\nonumber\\
		&\cdot \arctan\big(x_2(t-\cos(t))\big)+d\sin(t)\dst\int_0^{+\infty}\Ne^{-u}x_2(t-u)du+\cos(t)\bigg]\nonumber\\
		\\
		x_2'(t)=&\left(2+\cos\big(x_2(t)\big)\right)\bigg[-(2+\cos(t))x_2(t)+\hat{c}\sin(t)\arctan\big(x_1(t-\cos(t))\big)\nonumber\\
		&+\hat{d}\cos(t)\tanh\left(\dst\int_0^{+\infty}\Ne^{-u}x_1(t-u)du\right)\tanh\left(\dst\int_0^{+\infty}\Ne^{-u}x_2(t-u)du\right)+\Ne^{\sin(t)}\bigg]\nonumber
\end{align}
for $t\geq0$, where $c,d,\hat{c},\hat{d}\in\er$, is a $2\pi-$periodic example of a high-order Cohen-Grossberg neural network model.

Defining $\eta_{ij1},\til{\eta}_{ij1}:(-\infty,0]\to\er$ by
$$
  \eta_{ij1}(s)=\til{\eta}_{ij1}(s)=\int_{-\infty}^s\Ne^vdv,\Es s\in(-\infty,0],
$$
system \eqref{eq:modelo-hopfield-exemplo} is a particular situation of \eqref{3.1c}. However, \eqref{eq:modelo-hopfield-exemplo} is not a particular case of  \eqref{eq:H-O-Cohen-Grossberg-model}, thus the model studied in \cite{liu2011periodic} is not general enough to include \eqref{eq:modelo-hopfield-exemplo} as a particular example.

 Following the notations in \eqref{3.1c} and \eqref{eq:definicao-constantes}, we have $n=2$, $P=Q=1$, $\un{a}_1=\frac{1}{8}$, $\ov{a}_1=\frac{1}{6}$, $\un{a}_2=1$, $\ov{a}_2=3$, $\un{\beta}_1=8$, $\un{\beta}_2=1$, $\zeta_i=\varsigma_i=1$, $\gamma_{1121}^{(1)}=\gamma_{1121}^{(2)}=\frac{\pi}{2}$, $\gamma_{2111}^{(1)}=1$, $\mu_{1211}^{(2)}=\mu_{2121}^{(1)}=\mu_{2121}^{(2)}=1$,  $I_1(t)=\cos(t)$, $I_2(t)=\Ne^{\sin(t)}$, $\gamma_{1121}^{(1)}=\gamma_{1121}^{(2)}=\frac{\pi}{2}$, $\gamma_{2111}^{(1)}=1$, $\mu_{1211}^{(2)}=\mu_{2121}^{(1)}=\mu_{2121}^{(2)}=1$, $c_{1121}(t)=c\cos(t)$, $c_{2111}(t)=\hat{c}\sin(t)$,  $d_{1211}(t)=d\sin(t)$, $d_{2121}(t)=\hat{d}\cos(t)$, and all other $c_{ijl1}(t)=d_{ijl1}(t)=0$, for  $i,j,l=1,2$.

Consequently, example \eqref{eq:modelo-hopfield-exemplo} is $2\pi-$periodic and the matrix $\mathcal{M}$, defined in \eqref{def-M-matrix}, has the form
\begin{eqnarray*}
  \mathcal{M}=\left[\begin{array}{cc}
  	    1-\frac{\pi}{16}|c|&-\frac{1}{2}\left(\frac{\pi}{2}|c|+|d|\right)\\
  	    \\
  	   -3\left(\frac{\pi}{2}|\hat{c}|+|\hat{d}|\right)&1-3|\hat{d}|
  	    \end{array}\right].
\end{eqnarray*}

Condition \eqref{4.2} trivially holds with $\vartheta\in(0,1)$. Consequently, Corollary \ref{cor:crit-M-matrix} assures the existence and exponential stability of a $2\pi-$periodic solution of \eqref{eq:modelo-hopfield-exemplo}	in case $\mathcal{M}$ being a non-singular M-matrix. For example, if we consider $c=\frac{1}{\pi}$, $d=\frac{1}{100}$, $\hat{c}=\frac{1}{30\pi}$, $\hat{d}=\frac{1}{30}$, we have
\begin{eqnarray*}
	\mathcal{M}=\left[\begin{array}{cc}
		\frac{15}{16}&-\frac{101}{200}\\
		\\
		-\frac{1}{5}&\frac{9}{10}
	\end{array}\right],
\end{eqnarray*}
which is a non-singular M-matrix.

\noindent
{\bf Acknowledgments.}\\
This work was partially supported by Funda\c{c}\~ao para a Ci\^encia e a Tecnologia (Portugal) within the Projects UIDB/00013/2020, UIDP/00013/2020 of CMAT-UM (Jos\'e J. Oliveira), and Project UIDB/00212/2020 of CMA-UBI (Ahmed Elmwafy and C\'esar M. Silva).


\nocite{*}

\end{document}